\newtheorem{theorem}{Theorem}
\newtheorem{lemma}[theorem]{Lemma}
\newtheorem{proposition}[theorem]{Proposition}
\theoremstyle{definition}
\newtheorem{definition}[theorem]{Definition}
\newtheorem{example}[theorem]{Example}
\theoremstyle{remark}
\title{Compositional Game Theory, Compositionally}
\author{\setlength{\tabcolsep}{15pt}
	\begin{tabular}{ccc}
		Robert Atkey & Bruno Gavranovi{\'c} & Neil Ghani\\
		Clemens Kupke & J{\'e}r{\'e}my Ledent & Fredrik Nordvall Forsberg
	\end{tabular}
%	Robert Atkey \qquad Bruno Gavranovi{\'c} \qquad Neil Ghani\\
%Clemens Kupke \qquad J{\'e}r{\'e}my Ledent \qquad Fredrik Nordvall Forsberg
	\institute{The MSP Group, University of
		Strathclyde\\ Glasgow, \'{E}cosse Libre}
%	\email{}
}
\begin{document}
\maketitle

\begin{abstract}
 We present a new compositional approach to compositional game theory (CGT)
 based upon Arrows, a concept originally from functional programming, closely related to Tambara modules, and operators to build new Arrows from old. We
 model equilibria as a
module over an Arrow and define an operator to build a new Arrow from
such a module over an existing Arrow. We also model
 strategies as graded Arrows and define an operator which builds a
 new Arrow by taking the colimit of a graded Arrow. A final operator
builds a graded Arrow from a graded bimodule. We use this
 compositional approach to CGT to show how known and
 previously unknown variants of open games can be proven to form symmetric monoidal
 categories.
\end{abstract}

\section{Introduction}

A new strain of game theory --- {\em Compositional Game Theory (CGT)}
--- was introduced recently~\cite{GhaniHWZ18}.  At its core, CGT
involves a completely new representation of games --- {\em open games}
--- with operators for constructing larger and more complex games from
smaller, simpler (and hence easier to reason about) ones. Despite
recent substantial interest and further
development~\cite{julesMorphisms,GhaniKLF18,backprop,dioptics,BoltHZ19,probOG,JulesGameSemantics},
open games remain complex structures, e.g.\ the simplest form of an
open game is an 8-tuple consisting of 4 ports, a set of strategies,
play and coplay functions and an equilibrium predicate. More advanced
versions~\cite{BoltHZ19} require sophisticated structures such as
coends. This causes problems: (i) complex definitions lead to complex
and even error prone proofs; (ii) proofs are programs and overly
complex proofs turn into overly complex programs; (iii) even worse,
this complexity deters experimentation and innovation, for instance the
authors of~\cite{GhaniKLF18} were deterred from trying alternative
definitions as the work became prohibitive; and (iv) worse still, this
complexity suggests we do not fully understand the mathematical
structure of open games.

Category theorists have an answer to such complexity: trade complex
definitions in a simple category for simpler definitions in a more
complex category. This simplified open games by factoring their
definition via the category of lenses, and this more abstract
perspective led directly to the discovery of Bayesian Open
Games~\cite{BoltHZ19} where one naturally swaps out lenses in favour
of optics when one starts with a monoidal category. Swapping out a
subcomponent for another is a hallmark of compositionality, but,
because the definition of open games is not compositional, one has to
redevelop all the structure of open games built over optics from
scratch. The same thing happens with open games with mixed
strategies~\cite{GhaniKLF18} where a small change --- here to the type of the
equilibrium predicate --- leads to the need to redevelop the
full theory of these open games. The same story is true if we want to
(i) replace equilibria with best-response; or (ii) add type dependency;
or (iii) allow the covariant play function and contravariant co-play
function to live in different categories; or (iv) build open games over
monadic effects; or (v) consider a host of other variations
of open games. Every single time we must show from scratch that a
given variant of open game forms a monoidal category --- a highly
non-trivial task.

We present a compositional approach to CGT which allows us exactly to
make small modifications to the definition of an open game, check
these modifications have suitable local properties, and conclude the
resulting open games do indeed form a monoidal category. As with
compositionality in general, we start with a semantic structure and
define operators to build more complex examples from smaller ones.
The semantic structure we use is a presentation of monoidal
categories using profunctors and Arrows~\cite{hughesArrows}. This simplifies the
treatment of the monoidal structure to that of a strength, and also
allows us to work in a purely functorial setting. We stress that
arrows are just a convenient tool --- it would also be possible to work with monoidal categories directly.  This paper develops a three phase
compositional axiomatisation of CGT: (i) the first phase consists of
selecting an arrow to represent the play/coplay structure of open
games; (ii) the second phase
abstracts the equilibrium predicate to a bimodule over an arrow, and
defines an operator for building new arrows from an old arrow and a
bimodule over it; and (iii) the final phase abstracts the indexing of
open games by strategies by defining an operator which maps a graded
arrow to an arrow.  This provides a simple, compositional
axiomatisation of CGT in arrow theoretic terms.  We also show we can
add the strategies first and then later add the equilibrium
predicate. This later approach also allows us to treat probabilistic games
and open games based upon best response functions. The closest
approach in the literature is Hedges~\cite[\S 8]{JulesGameSemantics}, and while there are notable
similarities there are also significant differences, e.g.\ while Hedges~\cite{JulesGameSemantics} provides an
axiomatisation of certain forms of open games, it does not do so
compositionally, and is not general enough to cover probabilistic or
dependently typed open games.

In summary, this more abstract picture clearly exposes the
fundamental mathematical structure of open games in a way which has not
been done before. The originality of the paper lies in its conceptual
reimagining of open games as a compositional structure using canonical
mathematical structures such as arrows, modules and grading. Its
utility is demonstrated by the variety of examples --- some known and
others new to this paper --- encompassed by our framework. As a result,
we believe our presentation will both (i) lead to further new variants
of open games; and (ii) become the default presentation of open games.

%%% Local Variables:
%%% mode: latex
%%% TeX-master: "main"
%%% End:

\section{Profunctors and Arrows}
\begin{definition}[Profunctor]
	Given two categories $\cC$ and $\cD$, a \emph{profunctor}~$F$ from~$\cC$ to~$\cD$, denoted by $F : \cC \profto \cD$, is a functor $F : \cC^\op \times \cD \to \Set$.
\end{definition}

\begin{example} 
Let $\cC$ be a category. Then $\Hom{\cC}: \cC^\op \times \cC \to
\Set$ is a profunctor.
\end{example}

Profunctors, strong profunctors and monoids therein form our bedrock
and much of this paper is a toolbox for building new examples of
arrows from old. The bicategory $\Prof$ has categories as objects,
profunctors as morphisms, and natural transformations as $2$-cells.
Composition of profunctors is given by the the following coend
formula, for $F : \cC \profto \cD$ and $G : \cD \profto \cE$.
\[ (G \bullet F) (X,Z) := \int^{Y:\cD} F(X,Y) \times G(Y,Z) \]
The identity profunctor on~$\cC$ is $\Hom{\cC}$. We now fix a monoidal category~$\cC$ and focus on the category~$\Prof(\cC)$ of profunctors from~$\cC$ to~$\cC$ and natural transformations between them.

\begin{definition}[Strength]
\label{def:str}
	A \emph{(right) strength} for a profunctor $F : \Prof(\cC)$ is
	a family of morphisms
	$\st_{X,Y,Z} : F(X,Y) \to F(X \otimes Z, Y \otimes Z)$
        (natural in~$X,Y$, and dinatural in~$Z$) making the following two diagrams
	commute.
	We often write~$\st_Z$ or just~$\st$ when the subscripts can be inferred from context.
	\begin{center}
		\begin{tikzcd}
			F(X,Y) \arrow[d,"\st_Z",swap] \arrow[r,"\st_{Z\otimes Z'}"]
			& F(X \otimes (Z \otimes Z'), Y \otimes (Z \otimes Z')) \arrow[d,"{F(\alpha^{-1},\alpha)}"]
			\\
			F(X \otimes Z, Y \otimes Z) \arrow[r,"\st_{Z'}"]
			& F((X \otimes Z) \otimes Z', (Y \otimes Z) \otimes Z')
		\end{tikzcd}
	\qquad and \qquad
		\begin{tikzcd}
			F(X,Y) \arrow[d,"\st_I",bend left] \arrow[d,"{F(\rho,\rho^{-1})}",bend right,swap] \\
			F(X \otimes I, Y \otimes I)
		\end{tikzcd}
	\end{center}
\end{definition}

Such a pair $(F,\st)$ is called a \emph{strong
  profunctor}\footnote{See also Rivas and Jaskelioff~\cite{RivasJ17} for a similar
  definition in the case when $\cC$ is cartesian. This definition is
  also the same as Pastro and Street's definition of the additional
  structure required for a profunctor to be a \emph{Tambara
    module}~\cite{pastroStreet2008}. }. Note that several strengths may
exist for the same profunctor.
When several strengths for different profunctors (say $F$ and~$G$) are involved, we distinguish them using superscripts, $\st^F$ and~$\st^G$. A \emph{strong natural transformation}
from $(F,\st^F)$ to $(G,\st^G)$ is a natural
transformation~$\tau : F \to G$ that commutes with the strengths:
\begin{center}
	\begin{tikzcd}
		F(X,Y) \arrow[rr, "\st^F"]\arrow[d,"\tau_{X, Y}",swap]
		&& F(X \otimes Z, Y \otimes Z) \arrow[d,"\tau_{X \otimes Z, Y \otimes Z}"] \\
		G(X,Y) \arrow[rr,"\st^G"]
		&& G(X \otimes Z, Y \otimes Z)
	\end{tikzcd}
\end{center}

The category $\StrProf(\cC)$ of strong profunctors and strong natural transformations is monoidal, where the tensor product is the composition operator~$\bullet$,
and the unit is the functor $\Hom{\cC}$. This allows us to talk about
monoids inside $\StrProf(\cC)$.

\begin{definition}[Arrow]
\label{def:arrow}
An \emph{arrow} is a monoid in $(\StrProf(\cC), \bullet, \Hom{\cC})$,
i.e.\ a strong profunctor $(A,\st)$ together with strong natural transformations
${\pure : \Hom{\cC} \to A}$ and $\comp : A \bullet A \to A$, satisfying
the expected monoid laws.
\end{definition}

% \noindent
% In other words, an arrow on~$\cC$ consists of the following data,
% \begin{enumerate}
% 	\item A family of sets $A : |\cC| \times |\cC| \to \Set$.
% 	\item A family of functions $\pure_{XY} : \Hom{\cC}(X,Y) \to A(X,Y)$
% 	\item A family of functions $\comp_{XYZ} : A(X,Y) \times A(Y,Z) \to A(X,Z)$.
% 	\item A family of functions $\st_{XYZ} : A(X,Y) \to A(X \otimes Z, Y \otimes Z)$
% \end{enumerate}
% Which satisfy the following equations
% 	\begin{eqnarray}
% 	\label{eq:arrow-identity-1}
% 	\pure(\id) \comp a & = & a \\
% 	\label{eq:arrow-identity-2}
% 	a \comp \pure(\id) & = & a \\
% 	\label{eq:arrow-composition}
% 	(a \comp b) \comp c & = & a \comp (b \comp c) \\
% 	\label{eq:arrow-pure-comp}
% 	\pure(f;g)           & = & \pure(f) \comp \pure(g) \\
% 	\label{eq:arrow-pi1}
% 	\st(a) \comp \pure(\pi_1) &=& \pure(\pi_1) \comp a \\
% 	\label{eq:arrow-comm}
% 	\st(a) \comp \pure(\id \times f) &=& \pure(\id \times f) \comp \st(a) \\
% 	\label{eq:arrow-st-assoc}
% 	\st(a) \comp \pure(\alpha) &=& \pure(\alpha) \comp \st(\st(a)) \\
% 	\label{eq:arrow-st-pure}
% 	\st(\pure(f)) &=& \pure(f \times \id) \\
% 	\label{eq:arrow-st-functorial}
% 	\st(a \comp b) &=& \st(a) \comp \st(b)
% 	\end{eqnarray}

\noindent
If we unpack \cref{def:arrow} (and use the universal property of coends), an arrow on~$\cC$ can be described more concretely by the following data:
\begin{itemize}
	\item A family of sets $A : |\cC| \times |\cC| \to \Set$;
	\item A family of functions $\pure_{XY} : \Hom{\cC}(X,Y) \to A(X,Y)$;
	\item A family of functions $\comp_{XYZ} : A(X,Y) \times A(Y,Z) \to A(X,Z)$, called \emph{arrow composition};
	\item A family of functions $\st_{XYZ} : A(X,Y) \to A(X \otimes Z, Y \otimes Z)$;
\end{itemize}
subject to a number of \emph{arrow equations} expressing functoriality, naturality, and so on (see e.g.\ Atkey~\cite{bobArrows}).
We often use this presentation in the rest of the paper. We sometimes omit the subscripts to reduce notation when they can be inferred; and sometimes add a superscript, e.g.\  $\comp^A$, to specify which arrow they belong to.

\medskip

There is a notion of a \emph{left} strength $F(X,Y) \to F(Z \otimes X, Z \otimes Y)$, analogous to \cref{def:str}, that satisfies similar properties.
When the category $\cC$ is symmetric monoidal, left and
right strengths are interdefinable:
\begin{lemma}
	Let $\cC$ be symmetric monoidal. A right strength
	$\st$ for a profunctor $F : \Prof(\cC)$, induces a left
	strength $\st' : F(X,Y) \to F(Z \otimes X, Z \otimes Y)$ as
	$\st'_{XYZ} = F(\sigma_{ZX}, \sigma_{YZ}) \circ \st$. The left
	strength satisfies the (di)-naturality conditions and analogues of
    the diagrams in \cref{def:str}.
	% \begin{center}
	% 	\begin{tikzcd}
	% 		F(X,Y) \arrow[r,"\st'_{Z'}"] \arrow[d,"\st'_{Z\otimes Z'}"]
	% 		& F(Z' \otimes X, Z \otimes Y) \arrow[d,"\st'_{Z}"]
	% 		\\
	% 		F((Z \otimes Z') \otimes X, (Z \otimes Z') \otimes Y) \arrow[r,"{F(\alpha^{-1},\alpha)}"]
	% 		& F(Z \otimes (Z' \otimes X), Z \otimes (Z' \otimes Y))
	% 	\end{tikzcd}
	% \end{center}
	% and
	% \begin{center}
	% 	\begin{tikzcd}
	% 		F(X,Y) \arrow[rr,"\st'_I",bend left] \arrow[rr,"{F(\lambda^{-1},\lambda)}",bend right,swap]
	% 		&& F(I \otimes X, I \otimes Y)
	% 	\end{tikzcd}
	% \end{center}
\end{lemma}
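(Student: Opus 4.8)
The plan is to verify that the proposed formula $\st'_{XYZ} = F(\sigma_{ZX},\sigma_{YZ}) \circ \st_{XYZ}$ really is a left strength, deriving each of its laws from the corresponding law for the right strength~$\st$ by conjugating with the symmetry~$\sigma$ and invoking the coherence of~$\cC$. First I would check the typing: since $\sigma_{ZX} : Z \otimes X \to X \otimes Z$ feeds the contravariant argument of~$F$ and $\sigma_{YZ} : Y \otimes Z \to Z \otimes Y$ the covariant one, $F(\sigma_{ZX},\sigma_{YZ})$ sends $F(X \otimes Z, Y \otimes Z)$ to $F(Z \otimes X, Z \otimes Y)$, so indeed $\st'_{XYZ} : F(X,Y) \to F(Z \otimes X, Z \otimes Y)$ as required.

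For naturality in~$X$ and~$Y$, I would unfold $\st'$ and push a morphism $F(f,g)$ through the composite: functoriality of~$F$ absorbs $F(f,g)$ into the $\sigma$-twist, the naturality square of~$\sigma$ rewrites $\sigma_{ZX} \circ (Z \otimes f)$ as $(f \otimes Z) \circ \sigma_{ZX'}$ (and dually on the covariant side), and then the naturality of the right strength~$\st$ commutes $F(f \otimes Z, g \otimes Z)$ across~$\st$. Re-bundling the leftover symmetries yields $\st'_{X'Y'Z} \circ F(f,g)$, which is exactly naturality of~$\st'$. Dinaturality in~$Z$ is handled in the same style, replacing naturality of~$\st$ by its dinaturality and again feeding everything through the naturality of~$\sigma$ in each argument.

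For the unit diagram I would compute $\st'_I = F(\sigma_{IX},\sigma_{YI}) \circ \st_I$, substitute the right-unit law $\st_I = F(\rho,\rho^{-1})$, and collapse the two resulting $\cC$-morphisms using the standard symmetric monoidal unit coherence $\rho_X = \lambda_X \circ \sigma_{X,I}$. This gives $\rho_X \circ \sigma_{IX} = \lambda_X$ and $\sigma_{YI} \circ \rho_Y^{-1} = \lambda_Y^{-1}$, so that $\st'_I = F(\lambda,\lambda^{-1})$, the expected left-unit law.

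The main obstacle is the multiplicativity diagram, whose left-handed form reads $F(\alpha^{-1},\alpha) \circ \st'_{Z \otimes Z'} = \st'_{Z} \circ \st'_{Z'}$, with the associators now reassociating $(Z \otimes Z') \otimes X$ into $Z \otimes (Z' \otimes X)$. Here I would unfold both sides into strengths and symmetries, use naturality of~$\st$ to slide the intermediate symmetry twist past the second strength, and then apply the right-strength diagram $F(\alpha^{-1},\alpha) \circ \st_{Z \otimes Z'} = \st_{Z'} \circ \st_Z$ to replace the iterated strength by the single one. What remains is to reconcile $\sigma_{Z \otimes Z',\,X}$ (from the single application) with the composite of $\sigma_{Z',X}$, $\sigma_{Z,-}$ and associators (from the iterated one), and this is precisely where the hexagon coherence axioms for the braiding are required; getting those rewrites to align is the delicate part of the diagram chase. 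Alternatively, and more cleanly, one can observe that a left strength for $(\cC,\otimes)$ is exactly a right strength for the reversed tensor $\otimes^{\mathrm{rev}}$, and that $\sigma$ exhibits a symmetric monoidal isomorphism $(\cC,\otimes) \cong (\cC,\otimes^{\mathrm{rev}})$; transporting $\st$ along this isomorphism produces $\st'$ with all its laws at once, absorbing the hexagon bookkeeping into the already-established right-strength axioms.
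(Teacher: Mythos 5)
Your argument is correct. The paper actually states this lemma without any proof, so there is nothing to compare against; your verification --- the typing of $F(\sigma_{ZX},\sigma_{YZ})\circ\st$, (di)naturality via naturality of $\sigma$ and of $\st$, the unit law via the coherence $\lambda_X = \rho_X \circ \sigma_{I,X}$, and the multiplicativity law via the hexagon axiom --- is exactly the expected diagram chase, and you correctly locate the only nontrivial step in the hexagon. Your closing observation is also the cleanest way to package the whole lemma: $\st'$ is the pullback of the strength (Tambara) structure along the strong monoidal identity functor between $(\cC,\otimes)$ and $(\cC,\otimes^{\mathrm{rev}})$ whose structure map is $\sigma$, and the monoidal-functor coherence for $(\mathrm{id},\sigma)$ is precisely the hexagon, so all left-strength axioms are inherited at once; this fits the paper's own Tambara-module perspective, where such pullbacks along strong monoidal functors are standard.
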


In general, arrows give rise to \emph{premonoidal categories}~\cite{bobArrows}.
The reason we do not get a monoidal category is that there are two ways to define a map $A(X,Y) \times A(X',Y') \to A(X \otimes X', Y \otimes Y')$, and they need not coincide.
Since we are interested in (symmetric) monoidal categories, we need to restrict to \emph{commutative} arrows\footnote{There is a similar situation with \emph{strong monads}, whose Kleisli category is in general only premonoidal. To get a monoidal category, we need the stronger notion of \emph{commutative monad} (a.k.a.\ \emph{monoidal monad}). See \cref{ex:monad}.}.

\begin{definition}
	\label{def:commutative-arrow}
	Let $\cC$ be symmetric monoidal. An arrow $(A, \pure, \comp, \st)$
	is \emph{commutative} if the following diagram mixing left strength, right strength, and composition commutes:
	\begin{center}
		\begin{tikzcd}[column sep = 60pt]
			A(X,Y) \times A(X',Y') \arrow[dd,"\cong"] \arrow[r,"\st_{X'} \times \st'_Y"]
			& A(X \otimes X', Y \otimes X') \times A(Y \otimes X', Y \otimes Y') \arrow[d,"\comp"]
			\\
			& A(X \otimes X', Y \otimes Y')
			\\
			A(X',Y') \times A(X,Y) \arrow[r,"\st'_X \times \st_{Y'}"]
			& A(X \otimes X', X \otimes Y') \times A(X \otimes Y', Y \otimes Y') \arrow[u,"\comp",swap]
		\end{tikzcd}
	\end{center}
\end{definition}

\noindent
The following is a mild extension of results in the literature \cite{HeunenJ06,RivasJ17}.

\begin{proposition}
\label{thm:arrow-moncat}
Let $\cC$ be symmetric monoidal.
	Every commutative arrow~$A : \Prof(\cC)$
        gives rise to a symmetric monoidal category whose objects are
        the objects of~$\cC$, and whose morphisms from~$X$ to~$Y$ are
        the elements of~$A(X,Y)$. Moreover, there is a strict monoidal functor
        $J : \cC \to A$ where we abuse notation and write $A$ for both
        the arrow and the monoidal category it induces.
\end{proposition}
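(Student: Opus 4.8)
The plan is to transport the symmetric monoidal structure of $\cC$ along the unit $\pure$, using the two strengths to define the tensor on morphisms and the commutativity hypothesis to make that tensor functorial. First I would fix the underlying category: its objects are those of $\cC$, a morphism $X \to Y$ is an element of $A(X,Y)$, the identity on $X$ is $\pure_{XX}(1_X)$ for $1_X \in \Hom{\cC}(X,X)$, and composition of $f \in A(X,Y)$ with $g \in A(Y,Z)$ is $\comp_{XYZ}(f,g)$. Associativity and the unit laws for this composition are precisely the monoid laws of the arrow (associativity of $\comp$ and the left/right unit laws relating $\comp$ and $\pure$), so the category axioms hold directly by the arrow equations.

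Next I would define the tensor. On objects it is that of $\cC$; on morphisms, given $f \in A(X,Y)$ and $g \in A(X',Y')$, I set $f \otimes g := \comp(\st_{X'}(f), \st'_Y(g)) \in A(X \otimes X', Y \otimes Y')$, strengthening $f$ on the right by $X'$ and $g$ on the left by $Y$ before composing. \Cref{def:commutative-arrow} guarantees this agrees with the symmetric choice $\comp(\st'_X(g), \st_{Y'}(f))$, so the definition is unbiased. I would then verify bifunctoriality. Preservation of identities, $1_X \otimes 1_Y = 1_{X \otimes Y}$, follows because $\pure$ is a strong natural transformation (so the strength of an identity is again an identity, using that the strength on $\Hom{\cC}$ is whiskering by $\otimes$) together with the unit law for $\comp$. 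The interchange law $(f' \circ f) \otimes (g' \circ g) = (f' \otimes g') \circ (f \otimes g)$ follows from the associativity and (di)naturality axioms of the strengths combined with, crucially, the commutativity condition, which lets me reorder the two middle factors of the composite.

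The coherence data is then obtained by transport. I would define the structural isomorphisms as images under $\pure$ of those of $\cC$: $\alpha^A := \pure(\alpha^\cC)$, the unitors $\lambda^A := \pure(\lambda^\cC)$ and $\rho^A := \pure(\rho^\cC)$, and the braiding $\sigma^A := \pure(\sigma^\cC)$. Because $\pure$ is a strong natural transformation it commutes with the strengths, and since it also preserves composition a short calculation shows that $\pure(f \otimes_\cC g) = \pure(f) \otimes_A \pure(g)$. This single fact does double duty: it makes each structural map natural, and it transports the pentagon, triangle, hexagon, and $\sigma^2 = 1$ identities from $\cC$ to $A$ verbatim. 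The same calculation shows that $J : \cC \to A$, acting as the identity on objects and as $\pure_{XY}$ on hom-sets, is a functor preserving the tensor on objects strictly and satisfying $J(f \otimes g) = Jf \otimes Jg$, hence a strict monoidal functor.

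The main obstacle is the interchange law for the tensor on morphisms. Each partial tensor $(-) \otimes Z$ and $Z \otimes (-)$ is readily seen to be a functor from the naturality and associativity axioms of the corresponding strength alone --- this is essentially the premonoidal structure available for \emph{any} arrow. What is genuinely extra, and where the full force of the commutativity hypothesis is needed, is that the left and right partial tensors commute with one another, i.e.\ $(f \otimes 1) \circ (1 \otimes g) = (1 \otimes g) \circ (f \otimes 1)$. Establishing this identity and deducing the general interchange law from it is the one step that does not reduce to transporting coherence data from $\cC$, and it is exactly the content of \cref{def:commutative-arrow}; everything else in the argument is bookkeeping of the arrow equations and functoriality of $\pure$.
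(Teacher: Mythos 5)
The paper does not actually prove this proposition --- it defers to the literature (Heunen--Jacobs and Rivas--Jaskelioff), calling it a ``mild extension'' of known results. Your construction is exactly the standard one those references use: hom-sets from $A$, composition and identities from $\comp$ and $\pure$, tensor on morphisms via $\comp(\st_{X'}(f),\st'_Y(g))$, commutativity for interchange, and coherence data transported along $\pure$. The overall architecture is right, and you correctly locate the only genuinely new content in the interchange law and its dependence on \cref{def:commutative-arrow}.

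One point is stated too glibly: you claim that the single identity $\pure(f\otimes_\cC g)=\pure(f)\otimes_A\pure(g)$ ``makes each structural map natural.'' It does not --- it only gives naturality of $\alpha^A,\lambda^A,\rho^A,\sigma^A$ against morphisms in the image of $\pure$, and it transports the coherence \emph{equations} (pentagon, hexagon, etc.), which involve only such morphisms. Naturality at an arbitrary $A$-morphism, e.g.\ $((a\otimes b)\otimes c)\comp\pure(\alpha)=\pure(\alpha)\comp(a\otimes(b\otimes c))$, requires different ingredients: the identities $\pure(m)\comp a=A(m,\id)(a)$ and $a\comp\pure(m)=A(\id,m)(a)$ (which come from the unit laws together with the coend/dinaturality presentation of $\comp$), the two coherence diagrams for the strength in \cref{def:str}, and dinaturality of $\st$ in $Z$. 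These are all available from the definition of a (strong profunctor) arrow, so the statement is true, but the work is done by the strength axioms rather than by monoidality of $\pure$. With that correction the argument goes through.
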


\begin{example}
	For any symmetric monoidal category~$\cC$, the functor
        $\Hom{\cC} : \cC^\op \times \cC \to \Set$ is an arrow, whose
        associated monoidal category is~$\cC$ itself.
\end{example}

\begin{example} \label{ex:monad}
	Let $T : \cC \to \cC$ be a strong monad on~$\cC$. The
        \emph{Kleisli arrow} of~$T$ is defined by $A_T(X,Y) :=
        \Hom{\cC}(X, TY)$. If $T$ is commutative, then the arrow~$A_T$ is also
        commutative. Its associated monoidal category is the
        Kleisli category~$\cC_T$. Dually, if $T : \cC \to \cC$ is a strong comonad on~$\cC$, its \emph{co-Kleisli arrow} is defined by $A_T(X,Y) := \Hom{\cC}(TX, Y)$.
\end{example}

We use this last example to show lenses form an arrow in the
next section. This example also demonstrates the interest in
arrows from the functional programming community, as arrows provide a single
framework within which computations with both effects and coeffects
can live. To summarise: arrows provide a presentation of monoidal categories
where the monoidal structure is induced by strength, and generalise
both monadic and comonadic computation. We can use constructions
on functors and internal monoids to build new arrows from old; this
flexibility underpins their use in this paper.

%%% Local Variables:
%%% mode: latex
%%% TeX-master: "main"
%%% End:

\section{Open Games}
\label{sec:basic-games}

Rather than hit the reader with the full profunctor framework we have
developed, we show first how a compositional treatment of open games is
possible, which abstracts away from any specific notion of lens as
the primitive building block for open games. As this is expository
material we only sketch proofs, and work with $\Set$-based open games.

\subsection{Open games, concretely}

Open games were originally presented as follows~\cite{julesPhD}:

\begin{definition}[Open Game]
Let $X, Y , R$ and $S$ be sets. A pure open game $G = (J_G, P_G,
C_G, E_G) : (X, S) \rightarrow (Y, R)$ consists of:
\begin{itemize}
\item A set $J_G$, called the set of strategies of G,
\item A function $P_G : J_G \times X \rightarrow Y$ , called the play function of G,
\item A function $C_G : J_G  \times X \times R \rightarrow S$, called the coutility function of G, and
\item A predicate $E_G : X \times (Y \rightarrow R) \rightarrow \Pow
  (J_G)$, called the equilibrium function of G\footnote{Hedges~\cite{julesPhD} presents open games with best response functions rather than equilibrium functions, but we prefer the presentation using equilibria for simplicity. We will show how to also define open games with best response functions compositionally in \cref{sec:othergames}.} (where $\Pow$ is
  the powerset functor).
\end{itemize}
\end{definition}
An important conceptual step was the 
lens-ification of open games~\cite{GhaniHWZ18}.
The functor $W: \Set \times \Set^{op} \rightarrow
\Set \times \Set^{op}$ defined by $W(X,S) = (X, X \rightarrow S)$ is a
comonad. The category of lenses $\Lens$ is the co-Kleisli category of
$W$: concretely, its objects are pairs of sets, and the morphisms from $(X,S)$ to $(Y,R)$ are pairs of functions $(f,g)$ with $f : X \to Y$ and $g : X \times R \to S$.
 Noting that $X \cong \Lens(1,1) (X,S)$ and $(Y \rightarrow R)
\cong \Lens (Y,R) (1,1)$, we have:

\begin{theorem} Let $I = (1,1)$ and $A$ and $B$ be objects of $\Set \times \Set^{op}$.
An open game $G: A \rightarrow B $ is given by:
(i) a set $J_G$ of strategies;
(ii) a function $p_G:J_G \rightarrow \Lens(A,B)$; and
(iii) a predicate $E_G: J_G \times \Lens(I,A) \times \Lens(B,I) \rightarrow \Bool$.
\end{theorem}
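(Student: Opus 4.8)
The plan is to prove the theorem by exhibiting a natural bijection between the concrete open-game data $(J_G, P_G, C_G, E_G)$ and the triple (i)--(iii): part (i) is the set $J_G$, which is literally unchanged; part (ii) repackages the play and coutility functions as a single map into $\Lens$; and part (iii) transposes the equilibrium predicate. The whole argument is a chain of natural isomorphisms built from the cartesian closed structure of $\Set$ together with the two homset identifications $\Lens(I,A)\cong X$ and $\Lens(B,I)\cong(Y\to R)$ already recorded above. There is no genuinely difficult step; the content is careful bookkeeping, especially of the contravariant second coordinate of $\Set\times\Set^\op$.

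For part (ii), write $A=(X,S)$ and $B=(Y,R)$ and unfold the co-Kleisli description of $\Lens$: a morphism $A\to B$ is a $(\Set\times\Set^\op)$-morphism $W(X,S)=(X,X\to S)\to(Y,R)$, that is, a pair $(f,h)$ with $f:X\to Y$ in $\Set$ and $h:R\to(X\to S)$ (the direction reversed because of the $\Set^\op$ coordinate); currying $h$ gives exactly a coutility map $X\times R\to S$. Thus $\Lens(A,B)\cong(X\to Y)\times(X\times R\to S)$, and by the product--exponential adjunction a function $p_G:J_G\to\Lens(A,B)$ is the same as a pair of functions $J_G\times X\to Y$ and $J_G\times X\times R\to S$, namely $P_G$ and $C_G$.

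For part (iii), the identifications $\Lens(I,A)\cong X$ and $\Lens(B,I)\cong(Y\to R)$ (with $I=(1,1)$) show that a predicate $E_G:J_G\times\Lens(I,A)\times\Lens(B,I)\to\Bool$ is the same as a predicate $J_G\times X\times(Y\to R)\to\Bool$, and transposing in the $J_G$ argument via $(J_G\to\Bool)\cong\Pow(J_G)$ turns this into a function $X\times(Y\to R)\to\Pow(J_G)$, which is precisely the original equilibrium function. Assembling the three bijections coordinatewise yields the claimed equivalence. The one point requiring genuine care --- and the closest thing to an obstacle --- is the contravariant coordinate: one must reverse the second component of each lens morphism before currying, and check that all the bijections above are natural in $X,S,Y,R$, so that the correspondence is an isomorphism of the functors involved rather than a mere pointwise bijection.
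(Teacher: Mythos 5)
Your proof is correct and follows essentially the same route as the paper, which simply notes the identifications $X \cong \Lens(I,A)$ and $(Y \to R) \cong \Lens(B,I)$ and leaves the currying bookkeeping implicit. You have just spelled out in full the repackaging of $(P_G, C_G)$ into $p_G$ via the co-Kleisli description of $\Lens$ and the transposition of the equilibrium function via $\Pow(J_G) \cong (J_G \to \Bool)$, which is exactly the intended argument.
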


Apart from internalising the notion of open games into structure
internal to the category of lenses, and combining the treatment of
play and coplay into a single morphism, this more abstract formulation
gives cleaner proofs of the following fundamental theorem:

\begin{theorem}
There is a symmetric monoidal category $\Op$ whose objects are pairs of sets, and whose
morphisms are open games, with games with isomorphic strategy sets identified\footnote{An alternative to this quotient would be to work with bicategories instead.}.
This structure reflects the sequential and parallel composition of open games.
\end{theorem}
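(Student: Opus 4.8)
The plan is to lean on the fact, already available from the previous theorem, that $\Lens$ is a symmetric monoidal category (it is the co-Kleisli category of the comonad $W$). This means all the structure carried by the play functions $p_G : J_G \to \Lens(A,B)$ comes for free, so the genuinely new work is isolated in the behaviour of the strategy sets and the equilibrium predicates. First I would give the three defining operations. The identity on $A$ has strategy set $1$, picks out the identity lens, and has the constantly-true equilibrium. For sequential composition of $G : A \to B$ and $H : B \to C$, I take strategies $J_G \times J_H$, play function $p_{H\circ G}(\sigma,\tau) = p_H(\tau) \circ p_G(\sigma)$ computed in $\Lens$, and equilibrium $E_{H\circ G}((\sigma,\tau),h,k) = E_G(\sigma, h, k \circ p_H(\tau)) \wedge E_H(\tau, p_G(\sigma)\circ h, k)$, so that each subgame is played in the context obtained by composing the ambient history $h \in \Lens(I,A)$ and continuation $k \in \Lens(C,I)$ with the other subgame's play. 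For parallel composition of $G : A \to B$ and $G' : A' \to B'$ I again take $J_G \times J_{G'}$, use the monoidal product of lenses for the play function, and combine the two equilibria, using the symmetry of $\Lens$ to split a joint context over $A \otimes A'$ and $B \otimes B'$ into components feeding $G$ and $G'$ separately.

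Next I would address the quotient. Since strategy sets compose by cartesian product in $\Set$, they are associative and unital only up to the canonical isomorphisms $J_G \times (J_H \times J_K) \cong (J_G \times J_H) \times J_K$ and $J_G \times 1 \cong J_G$. Identifying games whose strategy sets are related by an isomorphism that intertwines their play functions and equilibrium predicates makes composition strictly associative and strictly unital, so the category axioms hold on equivalence classes. Well-definedness of all three operations on the quotient then amounts to checking that transporting along such an isomorphism of strategy sets commutes with forming composites and tensors, which is routine.

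For the monoidal structure itself, the associator, unitors, and symmetry are built directly from those of $\Lens$ applied to the play functions, with trivial strategy sets and trivially-true equilibria. Bifunctoriality of $\otimes$ (the interchange law) and the coherence axioms (pentagon, triangle, hexagon) then reduce to the corresponding coherence already established in $\Lens$ together with the coherence of products in $\Set$ on the strategy-set component. The symmetry requires also checking its compatibility with the equilibrium predicates, but this follows because the context-splitting in parallel composition was defined using exactly the symmetry of $\Lens$.

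The hard part will be the equilibrium predicates: unlike the play functions, these are not morphisms in a category we already understand, so nothing is inherited automatically. I expect the main obstacle to be verifying that the composite equilibrium condition is associative and unital, and that it transforms correctly under the associator, unitors, and symmetry. Each of these requires unfolding the definitions and checking that the histories $h$ and continuations $k$ are threaded through correctly via lens composition; the parallel case is the most delicate, since one must confirm that the two natural ways of splitting a joint context $(h,k)$ over $A \otimes A'$ and $B \otimes B'$ agree after passing through the symmetry of $\Lens$. Once these equilibrium verifications are in place, the symmetric monoidal category structure on $\Op$, and the faithful reflection of sequential and parallel composition of open games, follow.
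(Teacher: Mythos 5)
Your proposal is correct in substance, but it is the direct, monolithic proof (essentially the one from the original open games literature) rather than the proof this paper gives. The paper explicitly declines to verify the symmetric monoidal axioms ``in one fell swoop'': instead it factors the construction as $\Op = \Fam(\WithEq{\Lens})$ and obtains the theorem by chaining four reusable results --- $\Lens$ is a commutative arrow (\cref{thm:lens-arrow}); attaching the equilibrium predicate preserves commutative arrows given projection at points (\cref{prop:Eq-is-arrow}); the $\Fam$ construction preserves commutative arrows (\cref{thm:fam-arrow}); and every commutative arrow induces a symmetric monoidal category (\cref{thm:arrow-moncat}). Your identity, sequential composition, and quotient-by-isomorphic-strategy-sets all match what those propositions produce when unfolded, and your formula $E_G(\sigma,h,k\circ p_H(\tau))\wedge E_H(\tau,p_G(\sigma)\circ h,k)$ is exactly the paper's composite $f(x,t\comp^A z)\wedge g(x\comp^A s,z)$. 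The trade-off is clear: your route is self-contained and needs no arrow machinery, but every coherence check (associativity and unitality of the composite equilibrium, interchange, compatibility with associators and symmetry) must be done by hand on the 8-tuple, and none of it survives a change of the underlying lens-like structure; the paper's route pushes all of that into general lemmas so that each layer can be swapped independently, which is its entire purpose.

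One place where your sketch is genuinely too thin: in parallel composition, splitting the joint context is \emph{not} achieved by the symmetry of $\Lens$ alone. The history $h\in\Lens(I,A\otimes A')$ does split by projection, but the continuation $k\in\Lens(B\otimes B',I)$ can only be restricted to a continuation for $G$ by feeding in the output of $G'$'s play function at the $A'$-component of $h$ (and symmetrically). This is precisely what the paper isolates as the ``projection at points'' hypothesis on the arrow, and what its strength formula $\st_Z(s,f)=\bigl(\st_Z^A(s),\,\lambda x\lambda y.\,f(p^0(x),\,\mathrm{st'}^A_Y(p^1(x))\comp^A y)\bigr)$ makes explicit; parallel composition is then recovered as strength followed by sequential composition, and your ``most delicate'' verification --- that the two orders of splitting agree --- is exactly the commutativity of the resulting arrow. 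You should state the context-splitting operation explicitly (with the play functions appearing in it) before claiming the interchange law reduces to coherence in $\Lens$.
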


\subsection{A Compositional Treatment of Open Games}

Rather than proving that open games and similar
structures form a monoidal category in one fell swoop, i.e.\ non-compositionally, we adopt a more compositional, step by step approach.
As mentioned in the introduction we do this by using
profunctors and arrows.
The first step is easy: by \cref{ex:monad}, the comonad $W(X,S) = (X, X \rightarrow S)$
gives rise to an arrow on $\Set \times \Set^\op$, that we also denote by $\Lens$, like its
induced symmetric monoidal category.
Note that the strength of $W$ is defined with respect to the monoidal structure
of $\Set \times \Set^\op$ (not its cartesian structure), which
sends $(X,Y)$ and $(X',Y')$ to $(X\times X', Y
\times Y')$. The unit of this monoidal structure is $I = (1,1)$. 
%The first step is easy: as noted above all
%monoidal categories form arrows, and thus we have:

\begin{proposition}
\label{thm:lens-arrow}
The profunctor $\Lens : \Prof(\Set \times \Set^\op)$ is a commutative arrow.
\end{proposition}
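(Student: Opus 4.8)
The plan is to use the concrete presentation of $\Lens$ as an arrow obtained by unpacking \cref{def:arrow}, and then to reduce the proposition to a single commutativity check. Recall that a lens in $\Lens(X,Y)$, where $X = (X_1,X_2)$ and $Y = (Y_1,Y_2)$, is a pair $(f,g)$ with $f : X_1 \to Y_1$ and $g : X_1 \times Y_2 \to X_2$. The arrow operations are the familiar ones on lenses: $\pure$ sends a $\cC$-morphism to the lens whose backward part ignores its forward argument; $\comp$ is lens composition, sending $(f,g) \in \Lens(X,Y)$ and $(f',g') \in \Lens(Y,Z)$ to $(f' \circ f,\ (x,q) \mapsto g(x, g'(f(x),q)))$; and the right strength $\st_Z$, with $Z = (Z_1,Z_2)$, sends $(f,g)$ to the lens with forward part $f \times \mathrm{id}_{Z_1}$ acting as $g$ on the original coordinates and as the identity on the $Z$-coordinate. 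Since this is exactly the co-Kleisli arrow of the strong comonad $W$ (\cref{ex:monad}), profunctoriality, the two strength diagrams of \cref{def:str}, and the monoid (associativity and unit) laws all hold by the standard comonad/lens calculation; I would either cite these as the defining laws of the lens category or dispatch them by the routine componentwise computation.

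The only genuinely new content is commutativity, i.e.\ the diagram of \cref{def:commutative-arrow}. Here I would first fix the left strength $\st'$ induced from $\st$ via the symmetry of $\cC$ by the lemma relating left and right strengths: concretely, $\st'_Z(f,g)$ has forward part $\mathrm{id}_{Z_1} \times f$, acting as $g$ on the original coordinates and as the identity on $Z$. Then, taking lenses $(f,g) \in \Lens(X,Y)$ and $(f',g') \in \Lens(X',Y')$, I would compute both legs of the square explicitly. The top leg strengthens $(f,g)$ on the right by $X'$ and $(f',g')$ on the left by $Y$ and composes; the bottom leg strengthens $(f',g')$ on the left by $X$ and $(f,g)$ on the right by $Y'$ and composes.

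I expect both legs to evaluate to the same lens $X \otimes X' \to Y \otimes Y'$, namely the one with forward part $f \times f'$ and backward part $((x_1,x_1'),(y_2,y_2')) \mapsto (g(x_1,y_2),\ g'(x_1',y_2'))$. This is the heart of the argument and the main obstacle: one must check that unfolding lens composition in the two different orders produces the same backward function. The reason it works --- and what makes $\Lens$ commutative rather than merely premonoidal --- is that the forward and backward components of the two lenses act on disjoint factors of the product, so the interleaving performed by lens composition introduces no cross-dependency between $g$ and $g'$; in either order the $X'$- (respectively $Y'$-) coordinate is transported by an identity until the partner lens has already been applied. Once this coincidence is verified, the diagram of \cref{def:commutative-arrow} commutes and $\Lens$ is a commutative arrow, completing the proof.
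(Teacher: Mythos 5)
Your proposal is correct and follows essentially the same route as the paper, which simply invokes the co-Kleisli arrow construction of \cref{ex:monad} for the comonad $W$ and leaves the verification implicit. Your explicit check that both legs of the commutativity square yield the lens with forward part $f \times f'$ and backward part $((x_1,x_1'),(y_2,y_2')) \mapsto (g(x_1,y_2), g'(x_1',y_2'))$ is exactly the computation the paper omits, and it is carried out correctly.
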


This establishes the well-known fact that the category of lenses is symmetric monoidal.
Further, the unit of this arrow is the core of the lifting of
functions through open games.
The next part of our compositional construction requires that an arrow that satisfies a property called projection at points ---
a property that the key case of $\Lens$ has.
Condition~\eqref{eq:atpoints} below essentially expresses that the projection map  $p^0$ is an 
internal natural transformation. In particular, it implies naturality of $p^0$
in both parameters.

\begin{definition}[Projection at points]
\label{def:proj-at-points}
Let $A: \Prof(\Set \times
\Set^\op)$ be an arrow. Then $A$ has \emph{projection at points} if
there is a family of maps $p^0_{X,X'}: A(I, X \otimes X') \rightarrow A(I,X)$
such that 
\begin{equation}\label{eq:atpoints}  
p^0_{Y,X'}(j  \comp \st_{X'}(a)) = p^0_{X,X'}(j) \comp a  \qquad \mbox{ and } \qquad
p^0_{X,Y'}(j \comp \st'_{X} (a')) = p^0_{X,X'}(j) 
\end{equation}
for all $j \in A(I, X \otimes X')$, $a \in A(X,Y)$ and $a' \in A(X',Y')$.
Using symmetry, this means that there also is a projection 
$p^1_{X,X'}: A(I, X \otimes X') \rightarrow A(I,X')$ given by 
$p^1 _{X,X'} \coloneqq  p^0 _{X,X'} \circ A(\id,\sigma)$ 
that satisfies similar conditions.  
\end{definition}

Given any arrow~$A$ (such as $\Lens$) with projection at points, which we think of as representing some primitive building block,
we now want to attach to it an equilibrium predicate --- the key feature of open games.
The type of this equilibrium predicate is $\Eq(A)(X,Y) := A(I,X) \times A(Y,I) \to \Bool$.
Note that $\Eq(A)$ itself is not an arrow: it has a structure that we call an \emph{$A$-bimodule} in \cref{sec:bimodules}.
However, by bundling together $A$ and $\Eq(A)$, we do get an arrow:

\begin{proposition}
\label{prop:Eq-is-arrow}
Let $A:\Prof(\Set \times \Set^\op)$ be a commutative arrow with
projection at points. Then the profunctor $\WithEq{A} : \Prof(\Set \times \Set^\op)$ defined by
\[
  (\WithEq{A})(X,Y) = A(X,Y) \times \left(A(I,X) \times A(Y,I) \to \Bool\right)
\]
is a commutative arrow.
\end{proposition}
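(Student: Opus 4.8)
The plan is to exhibit explicit arrow structure on $\WithEq{A}$ and verify the arrow equations plus commutativity, leaning on the fact that $A$ is already a commutative arrow so that the "$A$-component" of everything comes for free. I would define each piece of structure as a pair: the first coordinate is just the corresponding structure of $A$, and the second coordinate manipulates the equilibrium predicate. Concretely, for a pair $(a, \phi) \in (\WithEq{A})(X,Y)$ with $a \in A(X,Y)$ and $\phi : A(I,X) \times A(Y,I) \to \Bool$, I would set $\pure(f) := (\pure^A(f), \top)$ (the predicate that is always true, since lifted morphisms should always be in equilibrium), and define composition $\comp : (\WithEq{A})(X,Y) \times (\WithEq{A})(Y,Z) \to (\WithEq{A})(X,Z)$ on pairs $(a,\phi)$ and $(b,\psi)$ by taking $a \comp^A b$ in the first coordinate, and in the second coordinate the predicate that, given a context $(c, c') \in A(I,X) \times A(Z,I)$, holds exactly when $\phi(c, b \comp c')$ and $\psi(a \comp c, c')$ both hold --- i.e.\ each subgame must be in equilibrium relative to the continuation supplied by the other. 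The strength is $\st_Z(a,\phi) := (\st^A_Z(a), \phi')$ where $\phi'$ reindexes the predicate along the projections $p^0, p^1$: given a context in $A(I, X\otimes Z) \times A(Y\otimes Z, I)$, one projects out the relevant $A(I,X)$ and $A(Y,I)$ components using $p^0$ (and its symmetric partner on the contravariant side) and feeds them to $\phi$.

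Having written down this data, I would verify the monoid and strength laws coordinatewise. The first coordinate is immediate because it is literally the arrow structure of $A$, so all $A$-equations hold by hypothesis; the real content is the second coordinate. Here the associativity and unit laws for $\comp$ reduce to checking that the bookkeeping of "which continuation each subgame sees" composes associatively and that $\top$ is neutral --- these follow from associativity of $\comp^A$ together with the defining equation~\eqref{eq:atpoints} of projection at points, which guarantees that projecting after composing agrees with composing after projecting. The strength equations (the two diagrams in \cref{def:str}) similarly reduce, in the predicate coordinate, to the naturality and dinaturality of $p^0$ and $p^1$ encoded in~\eqref{eq:atpoints}: reassociating $\st_{Z \otimes Z'}$ versus $\st_{Z'} \circ \st_Z$ and checking the unit strength $\st_I$ both amount to showing that the reindexing of $\phi$ is coherent, which is exactly what the two halves of~\eqref{eq:atpoints} deliver.

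The most delicate step, and the one I expect to be the main obstacle, is commutativity (\cref{def:commutative-arrow}): showing that the two ways of interleaving left and right strength with composition agree on the predicate coordinate. For the $A$-coordinate this is just the commutativity of $A$, which we are given; but for the predicate coordinate one must check that splitting a parallel context $(c,c') \in A(I, X \otimes X') \times A(Y \otimes Y', I)$ into its two pieces via $p^0$ and $p^1$, then recombining, yields the same truth value whether one strengthens-then-composes on the $X$-side-first or the $X'$-side-first. This is where both equations of~\eqref{eq:atpoints} must be used in tandem --- the first to handle the interaction of $\st$ with $\comp$ and the second to discard the "spectator" component --- and where the symmetry $\sigma$ relating $p^0$ and $p^1$ enters. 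I would organise this as a diagram chase, reducing the predicate equality to a pair of applications of~\eqref{eq:atpoints} on each side and then invoking the commutativity of $A$ to align the $A$-arguments fed into $\phi$ and $\psi$.

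Throughout, the guiding principle is that $\WithEq{A}$ is a semidirect-product-like extension of $A$ by the bimodule $\Eq(A)$ (as foreshadowed by the remark that $\Eq(A)$ is an $A$-bimodule in \cref{sec:bimodules}); consequently every axiom factors as "an $A$-axiom, already known" plus "a bimodule coherence condition," and projection at points is precisely the hypothesis that makes the latter hold. I would therefore phrase the verification to isolate the bimodule conditions explicitly, so that the proof makes transparent why projection at points is the exact assumption needed.
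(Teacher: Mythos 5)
Your unit and composition are exactly the paper's: $\pure(f) = (\pure^A(f), \top)$ and
$(s,f)\comp(t,g) = \bigl(s\comp^A t,\ \lambda x\,\lambda z.\, f(x, t\comp^A z)\wedge g(x\comp^A s, z)\bigr)$, and your overall strategy (coordinatewise verification, with the predicate coordinate carried by projection at points and the commutativity check as the delicate step) matches the paper, which itself omits the commutativity verification.

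However, there is a genuine gap in your definition of the strength. You propose to build the predicate coordinate of $\st_Z(a,\phi)$ by projecting the context $A(I, X\otimes Z)\times A(Y\otimes Z, I)$ down to $A(I,X)\times A(Y,I)$ using $p^0$ ``and its symmetric partner on the contravariant side.'' No such partner exists: \cref{def:proj-at-points} only supplies projections out of $A(I, X\otimes X')$, i.e.\ on the covariant (points) side; $p^1$ is still of type $A(I,X\otimes X')\to A(I,X')$, not $A(Y\otimes Z, I)\to A(Y,I)$. So your reindexing map cannot be constructed from the hypotheses. The paper's strength instead sets
\[ \st_Z(s,f) = \bigl(\st^A_Z(s),\ \lambda x\,\lambda y.\, f\bigl(p^0(x),\ \st'^A_Y(p^1(x))\comp^A y\bigr)\bigr), \]
i.e.\ it extracts the $Z$-component $p^1(x)\in A(I,Z)$ of the \emph{state}, lifts it with the left strength to $A(Y, Y\otimes Z)$, and composes with the continuation $y\in A(Y\otimes Z, I)$ to obtain the required element of $A(Y,I)$. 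This is not just a technicality of what type-checks: game-theoretically, the continuation seen by the subgame on the $X\to Y$ wire must record how the spectator $Z$-wire is fed back into $y$, and a bare projection would discard that coupling. With this corrected strength the rest of your outline (unit/associativity laws from \eqref{eq:atpoints} and associativity of $\comp^A$, commutativity from both halves of \eqref{eq:atpoints} together with commutativity of $A$) goes through as in the paper.
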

\begin{proof}
The unit of $\WithEq{A}$ is built from the unit
of $A$ and an equilibrium predicate which always returns
true. The composition
\[
  \comp_{X,Y,Z}:(\WithEq{A})(X,Y) \times (\WithEq{A})(Y,Z) \to (\WithEq{A})(X,Z)
\]
is obtained as follows. We are given four arguments 
$s:A(X,Y)$,
$t:A(Y,Z)$, $f:A(I,X) \times A(Y,I) \to \Bool$, and $g:A(I,Y) \times A(Z,I) \to \Bool$, and we define
$$ (s,f) \comp(t,g) \coloneqq \left(s\comp^At, \lambda x \lambda z.\, f(x, t\comp^Az) \wedge g(x\comp^As,z)\right)$$
where $\comp^A$ is the multiplication of $A$. 
% 
% $L_1 : A(X,Y)$,
% $L_2 : A(Y,Z)$, $h_1 : A(I,X) \times A(Y,I) \to \Bool$, and $h_2 :
% A(I,Y) \times A(Z,I) \to \Bool$. The first component of the composite
% is given by composing $L_1$ and $L_2$. For the second component, given
% a point $p : A(I,X)$ and a co-point $k : A(Z,I)$, we need to return a
% boolean. Notice that we can produce another point $p' : A(I,Y)$ by
% composing $p$ and $L_1$ and similarly produce  another copoint $k' :
% A(Y,I)$. Then we return the conjunction $h_1(p,k') \land h_2(p',k) : \Bool$.

To define the strength
$\st_{X,Y,Z}: (\WithEq{A})(X,Y) \to (\WithEq{A})(X \otimes Z, Y \otimes Z)$, we are given an
element  $s:A(X,Y)$ and a predicate
$f:A(I,X) \times A(Y,I) \to \Bool$.  We put
\[ \st_Z(s,f) \coloneqq \left(\st_Z^A(s), \lambda x \lambda y.\, f \left(p^0(x), \mathrm{st'}_Y^A(p^1(x))\comp^A y\right)\right) \]
where $\st^A$ denotes the strength of $A$, $\mathrm{st'}^A$ left strength and 
$p^0$ and $p^1$ are the projections at points.
% $L : A(X,Y)$ and a predicate
% $h:A(I,X) \times A(Y,I) \to \Bool$.  The first component of the
% strength for $\WithEq{A}$ is given by applying the strength for $A$ to
% $L$. For the second component, we have to define a predicate
% $A(I,X \otimes Z) \times A(Y \otimes Z, I) \to \Bool$. To this end,
% let $p : A(I,X \otimes Z)$ and $k : A(Y \otimes Z, I)$. Because $A$
% satisfies projection at points, we get $p_1:A(I,X)$ and
% $p_2:A(I,Z)$. Then using the left strength of the arrow $A$ and arrow
% composition we get $k_1 = k \circ (Y \otimes p_2) : A(Y,I)$ Thus, we
% obtain our boolean $h(p_1,k_1)$. 
We omit the proof of commutativity of
the arrow.
\end{proof}

Those familiar with the intricacies of the monoidal and sequential  
composition of open games will notice their essence above. What  
is really surprising is that this can be carried out for any set-based  
arrow satisfying projection at points and not just for lenses. This generality was not expected.  
Finally, we turn to indexing by sets of strategies. Again, we work solely with
arrows and introduce an arrow constructor to deal with strategies. The
simplest, used here, is a variant of the fundamental families construction
from categorical logic~\cite[Def.~1.2.1]{jacobs} where we refer to a function $J \rightarrow X$
as a $J$-indexed collection of $X$'s.

\begin{proposition}
  \label{thm:fam-arrow}
  Let $A:\Prof(\Set \times \Set^\op)$ be a commutative arrow. Then the
  profunctor $\Fam(A) : {\Prof(\Set \times \Set^\op)}$ defined by
  \[
    \Fam(A)(X,Y) = \msf{colimit}_{J:\Set_{\text{iso}}} (J \rightarrow A(X,Y))
  \]
    is a commutative arrow.
\end{proposition}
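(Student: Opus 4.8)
The plan is to work with the explicit description of the colimit: an element of $\Fam(A)(X,Y)$ is an equivalence class $[(J,f)]$ consisting of a set $J$ together with a function $f : J \to A(X,Y)$, where $(J,f)$ and $(J',f')$ are identified precisely when there is a bijection $\phi : J \to J'$ with $f' \circ \phi = f$. Since colimits in $\Set$ are computed pointwise, and $J \mapsto (J \to A(X,Y))$ is functorial in $(X,Y)$ via postcomposition with the profunctor action of $A$ (which commutes with the precomposition by bijections defining the colimit), the assignment $(X,Y) \mapsto \Fam(A)(X,Y)$ is automatically a profunctor. I would note in passing that the colimit is taken over the (large) groupoid $\Set_{\text{iso}}$, so one strictly restricts to a skeleton of small sets; this is exactly the ``identify isomorphic strategy sets'' quotient already used for~$\Op$.

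Next I would define the arrow structure on representatives and check that it descends to the colimit. For the unit, put $\pure(m) := [(1, \lambda \ast.\, \pure^A(m))]$, the singleton family given by the unit of~$A$. For composition, put $[(J,f)] \comp [(K,g)] := [(J \times K,\, \lambda(j,k).\, f(j)\comp^A g(k))]$, indexing the composite by the product of the two index sets. For strength, put $\st_Z[(J,f)] := [(J, \lambda j.\, \st^A_Z(f(j)))]$, applying the strength of~$A$ pointwise while keeping the index set fixed. Each operation respects the equivalence relation: relabelings $\phi : J \to J'$ and $\psi : K \to K'$ induce the relabeling $\phi \times \psi$ on the composite index and $\phi$ itself on the strength, both of which are bijections, while the unit carries no index to relabel. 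Hence all three maps are well defined on $\Fam(A)$.

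The arrow equations---functoriality, (di)naturality, associativity and unitality of~$\comp$, and the strength coherence diagrams of \cref{def:str}, together with the strong naturality of $\pure$ and $\comp$---then reduce to the corresponding equations for~$A$ applied pointwise to the families, modulo canonical bijections on the index sets. Concretely, associativity of~$\comp$ uses the associator $(J \times K) \times L \cong J \times (K \times L)$, the unit laws use $1 \times J \cong J \cong J \times 1$, and dinaturality is inherited pointwise; in each case the required equality of equivalence classes holds because the canonical bijection witnesses the identification. None of these steps is hard once well-definedness is established, since $(\Set, \times, 1)$ is symmetric monoidal and $A$ already satisfies the arrow laws.

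The main obstacle is commutativity. Evaluating the diagram of \cref{def:commutative-arrow} on a pair $[(J,f)]$, $[(K,g)]$, the top-right leg produces a family indexed by $J \times K$ whereas the bottom-left leg produces one indexed by $K \times J$; the point is that these two families coincide as elements of the colimit under the symmetry bijection $J \times K \cong K \times J$. Identifying the two index sets this way, equality of the two legs follows componentwise from the commutativity of~$A$, which holds by hypothesis. The genuine content is therefore to check that this symmetry bijection is exactly the relabeling matching the two legs---that is, that the index-set symmetry of $(\Set,\times,1)$ is correctly aligned with the element-level commutativity of~$A$---after which the result follows.
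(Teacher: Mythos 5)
Your proof is correct and follows the paper's own argument exactly: the same singleton-family unit, the same product-indexed composition (Hancock's tensor), and the same pointwise strength, with the arrow laws and commutativity reduced to those of $A$ modulo the canonical bijections on index sets. The paper only sketches these verifications, and your observation that the colimit over $\Set_{\text{iso}}$ is what makes associativity (via the associator bijection) and commutativity (via the symmetry $J \times K \cong K \times J$) go through is precisely the point the paper emphasises in the remark following its proof.
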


\begin{proof}
The unit of $\Fam(A)$ is the singleton family built from the
unit of~$A$. The composition of $\Fam(A)$ sends $(J, F : J \to A(X,Y))$ and $(K, G : K \to A(Y,Z))$ to $(J \times K, (j, k) \mapsto F(j) \comp^A G(k))$
(see also Hancock's tensor on containers~\cite[Def.~2.3]{hancocksTensor}).
Finally, strength for $\Fam(A)$ takes a family and post-composes
it with the strength of~$A$.
\end{proof}

Note that in the definition of $\Fam(A)$, taking the colimit over the category $\Set_{\text{iso}}$ of sets and bijective functions is crucial.
Indeed, it ensures that families indexed by isomorphic strategy sets are identified.
This quotient is important to prove the associativity of arrow composition; the same issue comes up in the usual treatment of open games~\cite{GhaniHWZ18}.
%Note that $\Fam(A)$
%chooses a set $J$ to index by, but the colimit ensures
%isomorphic choices are identified. This is preferable to
%using a sum and then, later, quotienting out isomorphic choices.
Combining Propositions \ref{thm:lens-arrow}, \ref{prop:Eq-is-arrow}, \ref{thm:fam-arrow} and \ref{thm:arrow-moncat}, we get a symmetric monoidal category $\Op = \Fam(\WithEq{\Lens})$
and an embedding $\Set \times \Set^\op \rightarrow \Op$, in a compositional manner.

%%% Local Variables:
%%% mode: latex
%%% TeX-master: "main"
%%% End:

\section{Bimodules and Graded Arrows}
\label{sec:gradings}

We now upgrade the expository illustration of the previous
section. The simplest observation is that there is nothing special
about the category $\Set$ --- any monoidal category can be used. We then
generalise the $\WithEqBare$ construction using a general notion of bimodule
for an arrow in \cref{sec:bimodules}. We decompose $\Fam(A)$
using a notion of \emph{graded} arrow in
\cref{sec:graded-arrows}. In \cref{sec:optics,sec:othergames} we will
use the work in this section to reconstruct open games on optics~\cite{BoltHZ19} and other known and new variations of open games, respectively.

\subsection{Bimodules}
\label{sec:bimodules}

The cumbersome proof of \cref{prop:Eq-is-arrow} can be further decomposed. Notice that the arrow
\[
  (\WithEq{A})(X,Y) = A(X,Y) \times \left(A(I,X) \times A(Y,I) \to \Bool\right) \]
is obtained by combining the arrow~$A$ with the predicate type $\Eq(A)(X,Y) := A(I,X) \times A(Y,I) \to \Bool$. As it turns out, $\Eq(A)$ has a nice structure that we call an $A$-bimodule\footnote{Note that our notion of an $A$-bimodule is equivalent to a profunctor for the (pre)monoidal category induced by $A$ (\cref{thm:arrow-moncat}), matching the alternative name ``bimodule'' for profunctors.}.
We will use this technology to prove \cref{prop:A-with-Eq-is-arrow}, which generalizes \cref{prop:Eq-is-arrow}.

We also take
the opportunity to notice that there is nothing special about the
Booleans in the definition of concrete open games, and replace them with
an arbitrary monoid with suggestively named operations
$(\mC, \mmult, \munit)$ --- Hedges' software
implementation~\cite{julesSoftware} uses this insight to collect more
in-depth diagnostics when checking equilibria.

\begin{definition}[Bimodule]
\label{def:bimodule}
Let $A : \cC \profto \cC$ be an arrow. An \emph{$A$-bimodule} is a profunctor $B : \cC \profto \cC$ equipped with natural transformations $\ell : A \bullet B \to B$ and $r : B \bullet A \to B$, such that:
\begin{enumerate}
\item $\ell$ is a left-action, i.e., the two diagrams below commute;
\begin{mathpar}
	\begin{tikzcd}
	A \bullet (A \bullet B) \arrow[rr,"\alpha"] \arrow[d,"A \bullet \ell",swap]
	&& (A \bullet A) \bullet B \arrow[d,"\comp \bullet B"] \\
	A \bullet B \arrow[r,"\ell",swap] & B & A \bullet B \arrow[l,"\ell"]
	\end{tikzcd}
	\and
	\begin{tikzcd}
	\Hom{\cC} \bullet B \arrow[r,"\lambda"] \arrow[d,"\pure \bullet B",swap] & B\\
	A \bullet B \arrow[ur,"\ell",swap]
    \end{tikzcd}
\end{mathpar}
\item $r$ is a right-action (similar diagrams);
\item $\ell$ commutes with $r$, i.e.,
\begin{mathpar}
	\begin{tikzcd}
		A \bullet (B \bullet A) \arrow[rr,"\alpha"] \arrow[d,"A \bullet r",swap]
		&& (A \bullet B) \bullet A \arrow[d,"\ell \bullet A"] \\
		A \bullet B \arrow[r,"\ell",swap] & B & B \bullet A \arrow[l,"r"]
	\end{tikzcd}
\end{mathpar}
\end{enumerate}
When $B$ is a strong profunctor, $\ell$ and~$r$ are also required to be strong natural transformations.
\end{definition}

%Concretely, a bimodule~$B$ is given by two families of maps $\ell_{XYZ} : A(X,Y) \times B(Y,Z) \to B(X,Z)$ and $r_{XYZ} : B(X,Y) \times A(Y,Z) \to B(X,Z)$ which are natural in~$X,Y$ and dinatural in~$Z$.

\begin{definition}
	Let $A$ be an arrow and $B : \StrProf(\cC)$ an $A$-bimodule.
	We say that~$B$ is \emph{commutative} when the following diagram (akin to the one of Definition~\ref{def:commutative-arrow}) commutes.
	\begin{center}
		\begin{tikzcd}[column sep = 60pt]
			A(X,Y) \times B(X',Y') \arrow[dd,"\cong"] \arrow[r,"\st_{X'} \times \st'_Y"]
			& A(X \otimes X', Y \otimes X') \times B(Y \otimes X', Y \otimes Y') \arrow[d,"\ell"]
			\\
			& B(X \otimes X', Y \otimes Y')
			\\
			B(X',Y') \times A(X,Y) \arrow[r,"\st'_X \times \st_{Y'}"]
			& B(X \otimes X', X \otimes Y') \times A(X \otimes Y', Y \otimes Y') \arrow[u,"r",swap]
		\end{tikzcd}
	\end{center}
\end{definition}

Recall that profunctors inherit the cartesian structure of $\Set$.
Given $F,G : \cC \profto \cD$, their product is defined by $(F \times G)(c,d) := F(c,d) \times G(c,d)$; and the terminal profunctor $1 : \cC \profto \cD$ is the constant functor that picks out a singleton set.
It is easy to check that this is also a cartesian structure on~$\StrProf(\cC)$.
Finally, note that $\bullet$ distributes over $\times$, in the sense that there is a strong natural transformation of type $F \bullet (G \times H) \to (F \bullet G) \times (F \bullet H)$, and similarly with $F$ on the right. \fullappendix{See \cref{app:StrProf-cartesian} for details.}

\begin{definition}[Bimodule with monoid]
\label{def:bimodule-monoid}
When $B$ additionally has a monoid structure $(B,m,e)$ w.r.t.\ the cartesian product $(\times, 1)$ in $\StrProf(\cC)$, we also require that~$\ell$ and~$r$ preserve the monoid structure of~$B$:
		\begin{mathpar}
			\begin{tikzcd}
				A \bullet (B \times B) \arrow[rr,"\text{distributivity}"] \arrow[d,"A \bullet m",swap]
				&& (A \bullet B) \times (A \bullet B) \arrow[d,"\ell \bullet \ell"] \\
				A \bullet B \arrow[r,"\ell",swap] & B & B \times B \arrow[l,"m"]
			\end{tikzcd}
			\and
			\begin{tikzcd}
				A \bullet 1 \arrow[r,"\id \bullet e"] \arrow[d,"!",swap] & A \bullet B \arrow[d,"\ell"]\\
				1 \arrow[r,"e",swap] & B
			\end{tikzcd}
		\end{mathpar}
\end{definition}

\begin{theorem}
\label{thm:bimodule-arrow}
\label{thm:bimodule-arrow-comm}
Assume $A : \cC \profto \cC$ is an arrow, and $B : \StrProf(\cC)$ is a monoid w.r.t.\ cartesian product $(\times, 1)$ and an $A$-bimodule, then $A \times B$ is an arrow.
Moreover, if both~$A$ and~$B$ are commutative (as an arrow, as a monoid and as a bimodule), then so is $A \times B$.
\end{theorem}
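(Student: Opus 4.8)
The plan is to exhibit $A \times B$ as a monoid in $(\StrProf(\cC), \bullet, \Hom{\cC})$, working entirely at the level of strong natural transformations so that all strength bookkeeping comes for free. First I would record that $A \times B$ is a strong profunctor: since $\StrProf(\cC)$ is cartesian, the product $(A\times B)(X,Y) = A(X,Y)\times B(X,Y)$ carries the componentwise strength $\st^A \times \st^B$, and the two diagrams of \cref{def:str} hold because they hold in each factor. Next I would assemble the monoid structure. The unit is $\pure^{A\times B} = \langle \pure^A,\, e \circ {!}\rangle : \Hom{\cC} \to A \times B$, where ${!} : \Hom{\cC} \to 1$ is the unique map to the terminal profunctor and $e : 1 \to B$ is the monoid unit of $B$. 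For the multiplication I would apply the bifunctor $\bullet$ to the projections $\pi_A : A \times B \to A$ and $\pi_B : A \times B \to B$: the $A$-component is $\comp^A \circ (\pi_A \bullet \pi_A)$, and the $B$-component is $m \circ \langle\, r \circ (\pi_B \bullet \pi_A),\ \ell \circ (\pi_A \bullet \pi_B)\,\rangle$. Concretely this sends $(s,\beta)$ and $(t,\gamma)$ to $\bigl(s \comp^A t,\ m(r(\beta,t),\,\ell(s,\gamma))\bigr)$, which specialises to the formula in the proof of \cref{prop:Eq-is-arrow} when $B = \Eq(A)$. Crucially, every piece used here --- the projections, the action of $\bullet$ on them, $\comp^A$, $\ell$, $r$, $m$, and $e$ --- is a strong natural transformation, so $\pure^{A\times B}$ and $\comp^{A\times B}$ are automatically strong; compatibility of the structure maps with $\st^{A\times B}$ therefore requires no separate argument, which is the main payoff of working inside $\StrProf(\cC)$.

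It then remains to check the monoid laws, which I would verify componentwise. The $A$-component of each law is exactly the corresponding arrow law for $A$, so it holds by hypothesis. The unit laws on the $B$-component reduce to the action unit laws $\ell(\pure^A,-) = \id$ and $r(-,\pure^A) = \id$ of \cref{def:bimodule}, together with the clause of \cref{def:bimodule-monoid} stating that $\ell$ and $r$ preserve the unit $e$ (so $r(e,s) = e$ and $\ell(s,e) = e$) and the unit law of $m$. The heart of the proof is associativity of the $B$-component: expanding $\comp^{A\times B}$ twice produces an $m$-combination of terms of the shapes $r(\beta_1,\, s_2\comp^A s_3)$, $\ell(s_1, r(\beta_2,s_3))$, and $\ell(s_1,\ell(s_2,\beta_3))$, and matching the two bracketings uses, in turn, that $\ell$ and $r$ preserve $m$ (\cref{def:bimodule-monoid}), then the associativity of the two actions and the $\ell$--$r$ commutation (\cref{def:bimodule}), and finally the associativity of $m$.

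For the moreover, I would verify the diagram of \cref{def:commutative-arrow} for $A \times B$, again componentwise. The $A$-component is precisely commutativity of the arrow $A$. On the $B$-component, expanding both paths yields $m$ applied to two pairs of mixed terms: the pair built from the first input's $A$-part and the second input's $B$-part is identified by the commutativity condition for the bimodule $B$ (the diagram relating $\ell$ and $r$ under the left and right strengths), while the remaining pair, built from the first input's $B$-part and the second input's $A$-part, requires the mirror instance of that condition, which I would derive from it using that $\cC$ is symmetric and left and right strengths are interdefinable. Reconciling the two $m$-factors across the two paths then invokes commutativity of the monoid $m$.

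The step I expect to be the main obstacle is the associativity of the $B$-component: it is the abstract counterpart of the ``cumbersome'' calculation behind \cref{prop:Eq-is-arrow}, and getting the four action- and monoid-axioms to line up in the right order is where essentially all the work lies, even though each individual rewrite is routine. The commutativity verification is delicate for the same reason --- one must keep careful track of which of the four left/right strengths appears in each term and invoke the symmetric structure of $\cC$ to convert between them --- but it introduces no new difficulty beyond this bookkeeping.
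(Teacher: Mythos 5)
Your proposal is correct and follows essentially the same route as the paper: form the componentwise strong profunctor $A\times B$, take $\pure^{A\times B}=\pure^A\times e$, build $\comp^{A\times B}$ by distributing $\bullet$ over $\times$, feeding the $A\bullet A$, $A\bullet B$, $B\bullet A$ pieces to $\comp^A$, $\ell$, $r$ (discarding $B\bullet B$) and merging with $m$, and then verify the monoid and commutativity laws componentwise from the bimodule, action and monoid axioms. The only (immaterial) difference is the order of the two arguments of $m$ in the $B$-component, and your identification of exactly which axioms drive the associativity and commutativity checks matches the details the paper defers to its appendix.
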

\begin{proof}
Since~$A$ and~$B$ are strong profunctors, then so is $A \times B$.
Moreover, our hypothesis gives us the following strong natural transformations:
\begin{align*}
\pure^A &: \Hom{\cC} \to A & \ell &: A \bullet B \to B & e &: 1 \to B \\
\comp^A &: A \bullet A \to A & r &: B \bullet A \to B & m &: B \times B \to B
\end{align*}
Then we define $\pure^{A \times B} := \pure^A \times e$, and $\comp^{A \times B}$ is given by
\begin{center}
\begin{tikzcd}
	(A \times B) \bullet (A \times B)
	  \arrow[rr,"\text{distributivity}"] & &
          \parbox{3.2cm}{\centering
            $(A \bullet A) \times (A \bullet B)$ \\
            $\times (B \bullet A) \times (B \bullet B)$}
	  \arrow[rr,"\comp^A\, \times\, \ell\, \times\, r\, \times\, !"] &&
	A \times B \times B
	  \arrow[r,"\id\, \times\, m"] &
	A \times B
\end{tikzcd}
\end{center}
One can then check that the arrow laws are satisfied\fullappendix{ (see Appendix~\ref{app:bimodules})}.
\end{proof}

%The arrow given by Theorem~\ref{thm:bimodule-arrow} may not be commutative, even when~$A$ is. To get a commutative arrow, we need to require one more property of~$B$:
%
%\begin{theorem}
%	\label{thm:bimodule-arrow-comm}
%	If $A$ is a commutative arrow, and $B : \StrProf(\cC)$ is a commutative $(\times, 1)$-monoid and a commutative $A$-bimodule, then $A \times B$ is a commutative arrow.
%\end{theorem}

In virtue of \cref{thm:bimodule-arrow-comm}, the proof of \cref{prop:Eq-is-arrow} now amounts to showing that $\Eq(A)$ is a commutative $A$-bimodule with a $(\times, 1)$-monoid structure. In fact, we can further decompose this proof by introducing the following notion of \emph{context}. A similar notion of context appears in Hedges~\cite{JulesGameSemantics}, where it is axiomatized at the level of the monoidal category induced by the arrow~$A$.

\begin{definition}[Context]
\label{def:context}
Let $A : \cC \profto \cC$ be an arrow. A profunctor $B : \cC \profto \cC$ is called a  \emph{context} when it is a commutative $A$-bimodule equipped with a \emph{costrength} $\cst_{XYZ} : B(X \otimes Z, Y \otimes Z) \to B(X,Y)$, satisfying some coherence diagrams similar to the ones for strength (full details in \cref{app:costrength}).
\end{definition}

The notion of costrength is a weakening of the notion of a trace for a symmetric monoidal category to arbitrary profunctors --- in particular if the profunctor $B$ is the $\Hom{}$-functor of a traced monoidal category, then $B$ has a canonical costrength.
The meat of \cref{prop:Eq-is-arrow} lies in the next \lcnamecref{thm:ctxt-bimodule}. Notice that the position of~$X$ and~$Y$ is reversed; hence the pre-composition with $\flip : \cC^\op \times \cC \to \cC \times \cC^\op$ in \cref{thm:M-to-the-B-flip}.
Another example of context, which does not require projection at points, will be given later in \cref{def:optic-context}.

\begin{lemma}
	\label{thm:ctxt-bimodule}
	If $A$ has projection at points, then $\Ctx_A(X,Y) = A(I,Y) \times A(X,I)$ is a context.
\end{lemma}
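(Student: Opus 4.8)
The plan is to equip $\Ctx_A(X,Y) = A(I,Y) \times A(X,I)$ with the three pieces of structure demanded by \cref{def:context}: the two bimodule actions, the commutativity condition, and the costrength. The guiding observation is that $\Ctx_A$ is \emph{separable}: it is the pointwise product of the covariant functor $A(I,-)$ and the contravariant functor $A(-,I)$, so the two bimodule actions can touch these factors independently. Only the costrength will use projection at points; everything else is driven by the monoid structure of~$A$.

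First I would define the actions by unfolding the defining coends. We have $(A \bullet \Ctx_A)(X,Z) = \int^{Y} (A(I,Y) \times A(X,I)) \times A(Y,Z)$, and I set $\ell$ to post-compose the point, $((f,h),a) \mapsto (f \comp^A a,\, h)$, using $\comp^A$ on the $A(I,-)$ factor and leaving $A(X,I)$ fixed. Dually $(\Ctx_A \bullet A)(X,Z) = \int^{Y} A(X,Y) \times (A(I,Z) \times A(Y,I))$, and $r$ pre-composes the copoint, $(a,(g,h)) \mapsto (g,\, a \comp^A h)$. Both maps are dinatural in~$Y$ by associativity of $\comp^A$, so they descend to the coends; because they act on disjoint factors, the left- and right-action laws, the unit laws (using $\pure^A$), and the law that $\ell$ commutes with~$r$ all reduce immediately to associativity and unitality of $\comp^A$. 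The commutativity condition is likewise inherited from commutativity of~$A$ on each factor. None of this needs projection at points.

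The heart of the argument is the costrength $\cst_Z : \Ctx_A(X \otimes Z, Y \otimes Z) \to \Ctx_A(X,Y)$, i.e.\ a map $A(I,Y \otimes Z) \times A(X \otimes Z,I) \to A(I,Y) \times A(X,I)$. Here I would use projection at points to split the point over $Y \otimes Z$ and feed its $Z$-component back into the copoint:
\[ \cst_Z(x,g) \coloneqq \bigl(\,p^0_{Y,Z}(x),\; \st'_X\!\bigl(p^1_{Y,Z}(x)\bigr) \comp^A g\,\bigr), \]
where $p^1_{Y,Z}(x) : A(I,Z)$ is left-strengthened to $A(X, X \otimes Z)$ and then composed with $g : A(X \otimes Z, I)$ to land in $A(X,I)$. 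This is exactly the feedback map that already appears, bundled into the predicate component of the strength of $\WithEq{A}$, in the proof of \cref{prop:Eq-is-arrow}; isolating it as the costrength of $\Ctx_A$ is what lets that earlier proof factor through this \lcnamecref{thm:ctxt-bimodule}. I would then verify the costrength coherence diagrams (compatibility with the tensor $Z \otimes Z'$ up to the associator of~$\cC$, and with the unit~$I$), together with the compatibility of $\cst$ with $\ell$, $r$, and the commutativity condition.

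The main obstacle is this final bundle of coherences, and it is precisely where the two equations of \eqref{eq:atpoints} are indispensable. To commute $\cst_{Z \otimes Z'}$ with $\cst_{Z'} \circ \cst_Z$ one must move $p^0$ and $p^1$ past the composite strength and past the reassociation of $Y \otimes (Z \otimes Z')$ into $(Y \otimes Z) \otimes Z'$, which is exactly what the identity $p^0(j \comp \st(a)) = p^0(j) \comp a$ and the discard identity $p^0(j \comp \st'(a')) = p^0(j)$ accomplish; the unit coherence combines $\st_I = A(\rho,\rho^{-1})$ with the behaviour of the projections at the unit. Since the calculation is heavy on the left and right strengths of~$A$ and the structural isomorphisms of~$\cC$, I would first establish the two feedback/naturality lemmas for $\cst$ against $\ell$ and $r$ in isolation, and only then assemble the costrength coherence, so as to keep each diagram chase short.
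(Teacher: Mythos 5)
Your construction coincides with the paper's own proof: the two actions act separately on the point and copoint factors via $\comp^A$, and your costrength $(x,g) \mapsto \bigl(p^0(x),\; \st'_X(p^1(x)) \comp^A g\bigr)$ is literally the paper's formula, with projection at points used exactly and only there. The only inessential difference is that you attach the labels $\ell$ and $r$ to the opposite actions (a consequence of reading the $\bullet$ convention literally), and you spell out the coherence checks that the paper leaves implicit.
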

\begin{proof}
  The left action is given by the family $\ell_{XYZ} : A(X,Y) \times \Ctx_A(Y,Z) \to \Ctx_A(X,Z)$, defined by:
  \[
    \ell_{XYZ} := (s,z,y) \mapsto (z, s \comp^A y) : A(X,Y) \times A(I,Z) \times A(Y,I) \longrightarrow A(I,Z) \times A(X,I)
  \]
  The right action is similar. To define the costrength, we use the strength of $A$ and the projections $p^0, p^1$:
%  \[ \cst_{XYZ} : \Ctx_A(X \otimes Z, Y \otimes Z) \to \Ctx_A(X,Y)
%  \]
  \[
  \cst_{XYZ} := (y,x) \mapsto (p^0(y), \st'^A_X(p^1(y)) \comp^A x) : A(I, Y \otimes Z) \times A(X \otimes Z, I) \longrightarrow A(I,Y) \times A(X,I)
  \qedhere
  \]
\end{proof}

\begin{theorem}
\label{thm:M-to-the-B-flip}
Let $A$ be a commutative arrow, $B : \cC \profto \cC$ be a context for~$A$, and
  $(M,\mmult,\munit)$ be a commutative monoid (in $\Set$). Then $M^{B\, \circ\, \flip}$
  is a commutative $A$-bimodule and $(\times,1)$-monoid.
\end{theorem}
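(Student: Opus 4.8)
The plan is to work throughout with the concrete description of $N := M^{B\,\circ\,\flip}$. Since $\flip$ swaps the two arguments and $M^{(-)}$ is contravariant, the components are the function sets $N(X,Y) = M^{B(Y,X)}$, whose elements are maps $\phi : B(Y,X) \to M$; the two applications of ``variance reversal'' cancel, so $N$ is again a profunctor $\cC \profto \cC$ (contravariant in $X$, covariant in $Y$), as the remark just before the statement anticipates. The $(\times,1)$-monoid structure is simply the pointwise one induced by $M$: I would define $m : N \times N \to N$ by $(\phi \mmult \psi)(b) := \phi(b) \mmult \psi(b)$ and $e : 1 \to N$ as the constant map at $\munit$. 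Commutativity and the monoid laws for this structure then follow immediately from the corresponding laws of $M$, and naturality/strength are inherited from the functoriality of $M^{(-)}$.

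Next I would obtain the bimodule and strength data by dualisation. Because homming into $M$ is contravariant, a left action on $B$ induces a \emph{right} action on $N$ and conversely; explicitly I would set $\ell^N(a,\phi) := \lambda b.\, \phi(r^B(b,a))$ and $r^N(\phi,a) := \lambda b.\, \phi(\ell^B(a,b))$, where $\ell^B, r^B$ are the actions witnessing that the context $B$ is an $A$-bimodule, the $\flip$ ensuring that all indices line up. The crucial point is that the strength on $N$ comes from the \emph{costrength} of $B$, via $\st^N_Z(\phi) := \lambda b.\, \phi(\cst^B_Z(b))$ --- this is precisely why the statement demands that $B$ be a \emph{context} in the sense of \cref{def:context} rather than merely a commutative bimodule; the left strength on $N$ is then recovered from the right one using the interdefinability of left and right strengths in the symmetric monoidal $\cC$. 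Each bimodule law for $N$ (associativity and unitality of $\ell^N$ and $r^N$, and the commuting of $\ell^N$ with $r^N$) is the image under $M^{(-)}$ of the dual law for $B$ with left and right exchanged, and each strength-coherence diagram for $N$ is dual to a costrength-coherence diagram for $B$.

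For the remaining conditions I would observe that $\ell^N$, $r^N$ and $\st^N$ are all defined purely by substitution into the argument of $\phi$. Consequently they commute with the pointwise operations of $M$, so that the two diagrams of \cref{def:bimodule-monoid} expressing that $\ell^N$ and $r^N$ preserve the $(\times,1)$-monoid structure hold essentially for free: one checks $\ell^N(a, \phi \mmult \psi) = \ell^N(a,\phi) \mmult \ell^N(a,\psi)$ and $\ell^N(a,e) = e$ pointwise, and symmetrically for $r^N$. It then remains to verify the commutative-bimodule diagram for $N$, which I would deduce from the commutativity of $B$ as a bimodule (part of its being a context) together with the compatibility of the costrength of $B$ with its two actions.

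I expect the commutativity condition to be the only genuine obstacle: it is the single place where both actions, both strengths, and the costrength of $B$ must interact coherently, so its verification is not purely formal and relies essentially on the commutativity axiom for the context $B$. A secondary, bookkeeping difficulty is keeping the (di)naturality requirements and the swapped variances consistent throughout the combined ``flip-and-dualise'' operation. Once the dictionary ``left $\leftrightarrow$ right, strength $\leftrightarrow$ costrength'' is fixed, however, every individual law reduces to a routine diagram chase transported along the contravariant functor $M^{(-)}$, and specialising to $B = \Ctx_A$ and $M = \Bool$ recovers exactly the strength and the two halves of the composition given in the proof of \cref{prop:Eq-is-arrow}.
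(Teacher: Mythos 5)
Your proposal is correct and matches the paper's proof essentially verbatim: the same pointwise monoid structure from $M$, the same action-swapping definitions $\ell^N(a,\phi) = \lambda b.\,\phi(r^B\,b\,a)$ and $r^N(\phi,a) = \lambda b.\,\phi(\ell^B\,a\,b)$, and the same derivation of the strength on $M^{B\,\circ\,\flip}$ by precomposition with the costrength of the context $B$. The remaining verifications are, as you say, transported from the axioms of $B$ along the contravariant $M^{(-)}$, which is exactly how the paper concludes.
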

\begin{proof}
	The left and right actions are given by the families of functions
	\begin{align*}
	\ell_{XYZ} :\; & A(X,Y) \times M^{B(Z,Y)} \longrightarrow M^{B(Z,X)} &
	r_{XYZ} :\; & M^{B(Y,X)} \times A(Y,Z) \longrightarrow M^{B(Z,X)} \\
	&(a,h) \longmapsto \lambda b.\; h(r^B\,b\,a) &
	&(h,a) \longmapsto \lambda b.\; h(\ell^B\,a\,b)
	\end{align*}
	The $(\times, 1)$-monoid structure is inherited from the one of~$M$:
	\begin{align*}
	e_{XY} :\; & 1 \longrightarrow M^{B(Y,X)} &
	m_{XY} :\; & M^{B(Y,X)} \times M^{B(Y,X)} \longrightarrow M^{B(Y,X)} \\
	&\ast \longmapsto \lambda b.\; \munit &
	&(h,h') \longmapsto \lambda b.\; h\,b \mmult h'\,b
	\end{align*}
	Finally, the strength $\st : M^{B(Y,X)} \longrightarrow M^{B(Y \otimes Z,X \otimes Z)}$ is obtained by pre-composing with~$\cst^B$.
	All the bimodule axioms of $M^{B\, \circ\, \flip}$ can then be easily derived from those of~$B$.
\end{proof}

\subsection{Graded Arrows and graded bimodules}
\label{sec:graded-arrows}

The $\Fam(A)$ construction is really a two step construction: it takes
an arrow $A$ and parameterises by elements of sets $J$, then the
parameter sets are hidden using a colimit. The intermediate step in
this process is not an arrow because its composition composes
morphisms with different parameter sets. However, it is a
\emph{graded} arrow, which we define by adapting the definition of
graded monads~\cite{katsumata2014parametric}. Graded monads have been used to model programming
languages with variable sets of possible side effects. The
multiplication of a graded monad is a natural transformation
$T_pT_q \Rightarrow T_{pq}$, where the $p,q$ are from a symmetric
monoidal category $(\cP, i, \cdot)$ of grades. Graded monads are lax
monoidal functors $\cP \to \mathrm{Endo}(\cC)$. Swapping endofunctors
for strong profunctors yields:
\begin{definition}
  A $\cP$-\emph{graded arrow} is a lax monoidal functor
  $A : \cP \to \StrProf(\cC)$.
\end{definition}

Unfolding this definition shows that a graded arrow comes equipped
with graded composition $A_p(X,Y) \times A_q(Y,Z) \to A_{pq}(X,Z)$ and
unit $\Hom{\cC}(X,Y) \to A_i(X,Y)$ that use the monoidal structure of
$\cP$. Strength $A_p(X,Y) \to A_p(X \otimes Z, Y \otimes Z)$ is
provided for every grade $p$, but when we use strength to define
parallel composition we will use the monoidal structure of grades. To
make sure that parallel composition makes sense, graded arrows must be
commutative. Note that the symmetry of $\cP$'s monoidal structure is
required for this definition to make sense.

\begin{definition}[Commutative Graded Arrow]
  A $\cP$-graded arrow $A$ is \emph{commutative} if the two functions
  $A_p(X,Y) \times A_q(X',Y') \to A_{pq}(X \otimes X', Y \otimes Y')$
  defined using left and right strength are equal.
\end{definition}

When a graded arrow is commutative we obtain a canonical parallel
composition of morphisms that commutes with sequential
composition. Note that the composition of parameters is the same as in the
sequential case, forced by an Eckmann-Hilton style argument.

% \begin{lemma}
%   Commutative graded arrows have a monoidal pairing operation, like in
%   monoidal arrows.
% \end{lemma}

We now use graded arrows to reconstuct $\Fam(A)$. The first step is
this proposition, which follows because the graded arrow is
constructed pointwise.
\begin{proposition}
  \label{prop:param-arrow}
  Let $(\cP, i, \cdot)$ be a symmetric monoidal subcategory of
  $(\Set, 1, \times)$, and let $A : \StrProf(\cC)$ be an arrow. Then
  $(- \to A)_p(X,Y) = p \to A(X,Y)$ is a $\cP^\op$-graded arrow. If $A$ is
  commutative, then so is $- \to A$.
\end{proposition}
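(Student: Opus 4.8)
The plan is to exhibit $(- \to A)$ as a lax monoidal functor $\cP^\op \to (\StrProf(\cC), \bullet, \Hom{\cC})$, and to observe that every piece of structure --- and every axiom --- is obtained by applying the product-preserving functor $(p \to -) : \Set \to \Set$ \emph{pointwise} to the corresponding data of the arrow $A$. First I would check that, for each grade $p$, the assignment $(X,Y) \mapsto (p \to A(X,Y))$ is a strong profunctor: it is the composite of $A : \cC^\op \times \cC \to \Set$ with $(p \to -)$, and its strength is $\st^A$ post-composed, i.e.\ $h \mapsto \st^A \circ h$. Naturality, dinaturality in $Z$, and the two coherence diagrams of \cref{def:str} all transport pointwise from those of $\st^A$. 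For the action on morphisms, note that a morphism $p \to q$ in $\cP^\op$ is a function $q \to p$ in $\cP \subseteq \Set$, and precomposition $h \mapsto h \circ f$ yields a strong natural transformation $(p \to A) \to (q \to A)$; this contravariance is exactly why the grading category is $\cP^\op$ rather than $\cP$.

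Next I would supply the lax monoidal structure. Since $\cP$ is a symmetric monoidal subcategory of $(\Set, 1, \times)$, its unit $i$ is a singleton, so $i \to A(X,Y) \cong A(X,Y)$, and I would take the lax unit $\Hom{\cC} \to (i \to A)$ to be $\pure^A$ viewed as a constant family. For the laxator $\mu_{p,q} : (p \to A) \bullet (q \to A) \to (p \cdot q \to A)$, recall that $p \cdot q = p \times q$ in $\cP^\op$; using the universal property of the coend defining $\bullet$, it suffices to give a family, dinatural in $Y$, sending $h : p \to A(X,Y)$ and $k : q \to A(Y,Z)$ to the function $(u,v) \mapsto h(u) \comp^A k(v)$ in $p \times q \to A(X,Z)$. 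Dinaturality in $Y$ is precisely the (di)naturality of $\comp^A$.

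I would then verify the lax monoidal coherence conditions --- associativity of $\mu$ and its two unit laws, together with the requirement that $\mu$ and the lax unit be strong natural transformations. Evaluated at a point $(u,v,\dots)$ of the relevant product of parameter sets, each of these reduces to an instance of the associativity, unit, or strength-compatibility axioms of the arrow $A$, so they hold because they hold for $A$. Finally, for the commutativity clause, the two maps $(p \to A(X,Y)) \times (q \to A(X',Y')) \to (p \times q \to A(X \otimes X', Y \otimes Y'))$ built from left and right strength agree at each $(u,v)$ with the two parallel composites of $h(u)$ and $k(v)$ in $A$, which coincide by commutativity of $A$ (\cref{def:commutative-arrow}); hence $- \to A$ is commutative.

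The main obstacle I anticipate is organisational rather than conceptual: keeping the coend bookkeeping for $\bullet$ honest when defining and composing the laxators, and tracking the contravariance so that the monoidal unit and tensor of $\cP^\op$ line up with the product of parameter sets. Once the pointwise principle is made precise --- that $(p \to -)$ preserves the finite products and the $\Set$-valued data out of which strong profunctors and arrows are assembled --- all the coherence checks collapse to the already-established arrow axioms for $A$.
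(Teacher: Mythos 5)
Your proposal is correct and follows exactly the route the paper intends: the paper offers no detailed proof, remarking only that the result ``follows because the graded arrow is constructed pointwise,'' and your argument is a faithful elaboration of that pointwise principle, with the laxator $(u,v) \mapsto h(u) \comp^A k(v)$ matching the composition used for $\Fam$ in \cref{thm:fam-arrow} and the contravariance correctly accounting for the grading by $\cP^\op$.
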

% \begin{proof}
%   $A^\cP$ so defined is a functor $\cP \to \Prof(\cC)$ by
%   construction. The strength, unit, and composition are defined
%   pointwise, so all the axioms follow from the arrow properties for
%   $A$.\bob{Not sure this proof is really needed.}
% \end{proof}

The second step in the $\Fam(A)$ construction is the following result,
which shows that there is a general way of ``summing out'' the grading
of a graded arrow to recover an (ungraded) arrow.
\begin{proposition}
  \label{thm:colim-graded-arrow}
  For every graded arrow $A : \cP \to \StrProf(\cC)$,
  $(\Hide{A})(X,Y) = \colim_p~A_p(X,Y)$ is an arrow. If $A$ is
  commutative, then so is $\Hide{A}$.
\end{proposition}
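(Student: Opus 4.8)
The plan is to recognise $\Hide{A}$ as a monoid in $(\StrProf(\cC),\bullet,\Hom{\cC})$, i.e.\ an arrow, and to read off all of its structure from the universal property of the colimit. The crucial observation driving everything is that profunctor composition $\bullet$ preserves colimits separately in each argument: it is assembled from the cartesian product and a coend, and in $\Set$ products are left adjoints while coends are themselves colimits, so both commute with the colimit over $\cP$. Hence there is a canonical isomorphism $(\colim_p A_p)\bullet(\colim_q A_q)\cong\colim_{(p,q)}(A_p\bullet A_q)$, and likewise products of the pointwise sets commute with the grading colimit. Abstractly this says that the colimit functor $[\cP,\StrProf(\cC)]\to\StrProf(\cC)$ is strong monoidal from Day convolution to $\bullet$; since a $\cP$-graded arrow is by definition a lax monoidal functor, i.e.\ a monoid for Day convolution, its image $\Hide{A}$ is automatically a monoid, and hence an arrow. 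I would nonetheless spell out the induced structure concretely, in the style of the rest of the paper.

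Concretely, first I would note that colimits of profunctors are computed pointwise in $\Set$, so $\Hide{A}(X,Y)=\colim_p A_p(X,Y)$ is again functorial (contravariantly in $X$, covariantly in $Y$), giving the underlying profunctor. The unit $\Hom{\cC}\to\Hide{A}$ is $\pure^A:\Hom{\cC}\to A_i$ followed by the colimit injection at the grade $i$. The strength comes from the per-grade strengths $\st^{A_p}:A_p(X,Y)\to A_p(X\otimes Z,Y\otimes Z)$, which form a cocone because each structure map $A_f:A_p\to A_{p'}$ of the graded arrow is a \emph{strong} natural transformation and so commutes with strength; they therefore induce a map on the colimit. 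Composition is the composite
\[
\Hide{A}(X,Y)\times\Hide{A}(Y,Z)\;\cong\;\colim_{(p,q)}\bigl(A_p(X,Y)\times A_q(Y,Z)\bigr)\;\xrightarrow{\comp^A}\;\colim_{(p,q)}A_{p\cdot q}(X,Z)\;\to\;\colim_r A_r(X,Z),
\]
where the first isomorphism is the commutation of products with the colimit, the middle map is the graded composition, and the last map is induced by the monoidal product functor $\cdot:\cP\times\cP\to\cP$.

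It then remains to check the arrow equations. I expect each of them to hold on the colimit precisely because it holds at every grade: two maps out of a colimit agree as soon as they agree after restriction along every injection, and at each tuple of grades the required identity is exactly an arrow equation for the graded arrow $A$, combined with the coherence of the lax monoidal structure and the unit and associativity isomorphisms of $\cP$. The main obstacle is associativity: the two bracketings of a triple composition land in the colimit via the grades $(p\cdot q)\cdot r$ and $p\cdot(q\cdot r)$, and one must verify that the associator of $\cP$ is exactly the morphism along which the colimit cocone identifies these two images; unwinding this, together with a Fubini-style reindexing of the iterated colimits so that products and coends can be pushed through, is the one genuinely bookkeeping-heavy step. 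Finally, for the commutative case, by definition of a commutative graded arrow the two functions $A_p(X,Y)\times A_q(X',Y')\to A_{p\cdot q}(X\otimes X',Y\otimes Y')$ built from left and right strength already coincide for every $p,q$; since the colimit injections are jointly epic, the two induced parallel compositions on $\Hide{A}$ coincide as well, so $\Hide{A}$ is a commutative arrow.
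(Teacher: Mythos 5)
Your proof is correct and follows essentially the same route as the paper's: the unit via $\pure_i$ followed by the colimit injection at grade $i$, composition via the commutation of products with colimits followed by the graded composition and reindexing along $\cdot : \cP \times \cP \to \cP$, strength induced gradewise, and the axioms inherited grade by grade. The Day-convolution framing you add at the start is a pleasant conceptual gloss, but the underlying construction is the one in the paper.
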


\begin{proof}
  $\Hide{A}$ is certainly a profunctor, so it remains to construct the
  monoid operations. The unit is defined by the composite
  $\Hom{\cC}(X,Y) \stackrel{\pure_i}\to A_i(X,Y) \stackrel\iota\to \colim_p
  A_p(X,Y)$ , where $\iota$ is the injection into the
  colimit. Composition is defined using colimits commuting with
  products in $\Set$:
  $(\colim_p F_p(X,Y)) \times (\colim_q F_q(Y,Z)) \cong \colim_{p,q}
  \left(F_p(X,Y) \times F_q(Y,Z)\right) \stackrel{\colim
    \comp_{p, q}}\longrightarrow \colim_{p,q} F_{pq}(X,Z) \to \colim_r
  F_r(X,Z)$. Strength is defined as
  $\colim_p F_p(X,Y) \stackrel{\colim \st}\longrightarrow \colim_p
  F_p(X \otimes Z, Y \otimes Z)$. Each of the axioms follows from the
  corresponding axiom for the graded arrow $A$.
% \begin{enumerate}
% \item Composition:
%     \begin{center}
%       \begin{tikzcd}
%         (\colim_p F_p(X,Y)) \times (\colim_q F_q(Y,Z)) \arrow[d,"\cong"]\\
%         \colim_{p,q} \left(F_p(X,Y) \times F_q(Y,Z)\right) \arrow[d,"\colim_{p,q} m_{pq}"]\\
%         \colim_{p,q} F_{pq}(X,Z) \arrow[d] \\
%         \colim_r F_r(X,Z)
%       \end{tikzcd}
%     \end{center}
%     where we have used the fact that colimits commute with products in $\Set$.
%   \item Strength:
%     \begin{tikzcd}
%       \colim_p F_p(X,Y) \arrow[r,"\colim_p \st_p"] & \colim_p F_p(X \otimes Z, Y \otimes Z)
%     \end{tikzcd}
%   \item Axioms
%   \end{enumerate}
\end{proof}

Thus we have decomposed the $\Fam(A)$ construction into
$\Hide{(- \to A)}$, using the intermediate concept of graded
arrow. We will also use this technology in \cref{thm:optics-arrow} to prove that optics form an arrow, as well as in \cref{sec:othergames} to introduce other indexing operators besides~$\Fam$. For variations such as best-response games and probabilistic games, we need to introduce a graded version of bimodules:

\begin{definition}[Graded bimodule]
  \label{defn:graded-bimodule}
  Let $A : \cP \to \StrProf(\cC)$ be a graded arrow. A \emph{graded
  $A$-bimodule} is a functor $B : \cP \to \Prof(\cC)$ equipped with natural transformations $\ell_{pq} : A_p \bullet B_q \to B_{pq}$ and
  $r_{pq} : B_p \bullet A_q \to B_{pq}$ natural in $p$
  and $q$, satisfying the bimodule axioms:
  \begin{enumerate}
  \item $\ell$ is a left action,
  \item $r$ is a right action,
  \item $\ell$ commutes with $r$.
  \end{enumerate}
When $B : \cP \to \StrProf(\cC)$,  we moreover require $\ell_{pq}$ and $r_{pq}$ to be strong natural transformations.
As in \cref{sec:bimodules}, we can define the notion of a \emph{commutative} graded bimodule; and when each~$B_p$ has a $(\times, 1)$-monoid structure, we require that~$\ell$ and~$r$ should preserve it. \fullappendix{See \cref{app:graded-bimodules} for full details.}
\end{definition}

\noindent
The following theorem is the graded generalisation of
\cref{thm:bimodule-arrow}.
\begin{theorem}
  \label{thm:graded-bimodule-arrow}
  Assume $A : \cP \to \StrProf(\cC)$ is a commutative graded arrow, and
  $B : \cP \to \StrProf(\cC)$ is a commutative graded $A$-bimodule, such that $B_p$ is a commutative $(\times, 1)$-monoid for every $p$. Then $A \times B : \cP \to \StrProf(\cC)$ is a commutative graded arrow.
\end{theorem}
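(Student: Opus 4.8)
The plan is to follow the proof of \cref{thm:bimodule-arrow} essentially verbatim, but to perform every construction grade-by-grade and then verify that the resulting families are natural in the grades and compatible with the monoidal structure of~$\cP$. Since both $A$ and $B$ are functors $\cP \to \StrProf(\cC)$ and the cartesian product of two strong profunctors is again a strong profunctor, the assignment $(A \times B)_p := A_p \times B_p$ is immediately a functor $\cP \to \StrProf(\cC)$, whose strength at grade $p$ is the pointwise product $\st^A_p \times \st^B_p$. So the only genuinely new data to supply is the lax monoidal (i.e.\ graded arrow) structure, and everything else is inherited componentwise.

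For the unit I would take, at the grade $i$, the map $\Hom{\cC} \cong \Hom{\cC} \times 1 \xrightarrow{\pure^A_i \times e_i} A_i \times B_i$, where $e_i : 1 \to B_i$ is the monoid unit of $B_i$. For the graded composition at grades $p$ and $q$ I would reuse exactly the composite of \cref{thm:bimodule-arrow}, now decorated with grades: first apply distributivity of $\bullet$ over $\times$ to send $(A_p \times B_p) \bullet (A_q \times B_q)$ to $(A_p \bullet A_q) \times (A_p \bullet B_q) \times (B_p \bullet A_q) \times (B_p \bullet B_q)$; then apply $\comp^A_{pq} \times \ell_{pq} \times r_{pq} \times\, !$ to land in $A_{pq} \times B_{pq} \times B_{pq}$; and finally apply $\id \times m_{pq}$ using the monoid multiplication $m_{pq}$ on $B_{pq}$. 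The hypothesis that each $B_p$ is a $(\times,1)$-monoid is precisely what makes $m_{pq}$ available at the target grade $pq$, and naturality of $\comp^A$, $\ell$, $r$, $m$ in the grades makes the whole family natural in $p$ and $q$.

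It then remains to verify the graded arrow axioms, which are the lax monoidal coherences; each reduces to a hypothesis already available. The unit laws follow from the unit laws of $A$ together with the action-unit axioms of \cref{def:bimodule} and the monoid-unit axiom of \cref{def:bimodule-monoid}; associativity follows from associativity of $\comp^A$, the left- and right-action axioms, the compatibility of $\ell$ with $r$, associativity of $m$, and crucially the axioms of \cref{def:bimodule-monoid} stating that $\ell$ and $r$ preserve the monoid structure of $B$. The only feature genuinely new relative to the ungraded case is bookkeeping of grades: the two sides of associativity live a priori over $B_{(pq)s}$ and $B_{p(qs)}$, and they are identified using functoriality of $B$ along the associator of $\cP$, exactly as for graded monads, with the coherence of $\cP$'s monoidal structure guaranteeing these identifications are consistent.

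Finally, for commutativity I would show that the two maps $(A \times B)_p(X,Y) \times (A \times B)_q(X',Y') \to (A \times B)_{pq}(X \otimes X', Y \otimes Y')$ built from the left and right strengths agree, by projecting onto the $A$- and $B$-components separately: the $A$-component equality is commutativity of the graded arrow $A$, while the $B$-component equality follows from commutativity of the graded bimodule $B$ together with commutativity of the monoids $m_p$. I expect the associativity check to be the main obstacle, since there the action axioms, the monoid laws, and the monoid-preservation axioms must all be combined simultaneously while tracking the interleaving of $\ell$, $r$ and $m$ over the grade $pqs$; once the grades are matched correctly, however, every individual square that arises is one of the assumed bimodule, monoid, or preservation axioms.
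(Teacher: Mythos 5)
Your proposal is correct and takes essentially the same approach as the paper, which simply notes that the proof is very similar to that of \cref{thm:bimodule-arrow}: the unit, composition (via distributivity, then $\comp^A \times \ell \times r \times\,!$, then $\id \times m$), and commutativity check are all carried over grade-by-grade exactly as you describe. Your additional remarks on tracking grades through the associativity coherence via functoriality of $B$ along the associator of $\cP$ correctly identify the only genuinely new bookkeeping.
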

\begin{proof}
  The proof is very similar to the proof of Theorem~\ref{thm:bimodule-arrow}. %, see Appendix~\ref{TODO} for details.
\end{proof}

%%% Local Variables:
%%% mode: latex
%%% TeX-master: "main"
%%% End:

\section{Open games based on optics}
\label{sec:optics}

In this section, we use graded arrows and bimodules compositionally define open games starting
from a monoidal category, so as to treat open games containing
computational effects via the Kleisli category. We fundamentally
achieve the same result as \cite{BoltHZ19}, but we do so in a
compositonal manner. Lenses only define a monoidal category when the
underlying category is cartesian. In a monoidal category we work with
optics~\cite{pastroStreet2008,milewski,catOptics,gibbonsPearl}
instead:
\begin{proposition}
  \label{thm:optics-arrow}
  Let $\cC$ be symmetric monoidal, and $A : \Prof(\cC)$ a commutative
  arrow. Then the profunctor $\Optic{A} : \Prof(\cC \times \cC^\op)$
  defined by
  \[
    \Optic{A}((X, S), (Y, R)) = \int^{P\in\cC} A(X, P \otimes Y) \times A(P \otimes R, S)
  \]
  is also a commutative arrow.
\end{proposition}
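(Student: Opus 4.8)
The plan is to decompose $\Optic{A}$ in exactly the way we decomposed $\Fam$, namely as $\Hide{B}$ for a graded arrow $B$ whose grade is the residual object $P$, so that the coend in the statement is precisely the colimit that ``sums out'' the grade. Concretely, I take the grade category to be $(\cC, I, \otimes)$ itself and set
\[
  B_P((X,S),(Y,R)) = A(X, P \otimes Y) \times A(P \otimes R, S),
\]
which is functorial in the bidirectional objects $(X,S),(Y,R) \in \cC \times \cC^\op$, hence a profunctor. Its right strength for tensoring with $(W,V)$, using the pointwise monoidal structure $(X,S) \otimes (W,V) = (X \otimes W, S \otimes V)$ on $\cC \times \cC^\op$, is assembled from the right strength of $A$ together with the associators of~$\cC$ (one copy acting on the play component $A(X,P \otimes Y)$, one on the coplay component $A(P \otimes R, S)$), making each $B_P$ a strong profunctor.

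Next I would equip $B$ with the structure of a commutative graded arrow, i.e.\ a lax monoidal functor $\cC \to \StrProf(\cC \times \cC^\op)$. The graded unit $\Hom{\cC \times \cC^\op} \to B_I$ is $\pure^A$ composed with the unitors $I \otimes Y \cong Y$ and $I \otimes R \cong R$. For the graded composition $B_P \bullet B_Q \to B_{P \otimes Q}$, given a play/coplay pair at grade $P$ and one at grade $Q$, I push the second player's play $A(Y, Q \otimes Z)$ through the left strength $\st'$ of $A$ (tensoring $P$ on the left) and compose it with the first play via $\comp^A$; symmetrically I feed the coplays through $\st'$ and $\comp^A$. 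Reassociating $P \otimes (Q \otimes -) \cong (P \otimes Q) \otimes -$ lands both components in grade $P \otimes Q$. Commutativity of the induced parallel composition then follows directly from commutativity of $A$ (\cref{def:commutative-arrow}).

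Finally I would invoke \cref{thm:colim-graded-arrow}: summing out the grade of the commutative graded arrow $B$ yields a commutative arrow $\Hide{B}$, which by construction is $\Optic{A}$, the colimit over grades being exactly the coend $\int^{P} A(X, P \otimes Y) \times A(P \otimes R, S)$. The verification that each graded-arrow axiom for $B$ reduces to the corresponding axiom for $A$ together with coherence in $\cC$ is routine but lengthy bookkeeping of associators, unitors, and the symmetry of $\cC$ used to pass between left and right strength.

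The main obstacle I anticipate concerns the coend itself. Unlike the $\Fam$ case, where the colimit is over the groupoid $\Set_{\text{iso}}$, here $B_P$ depends covariantly on $P$ through the play component but \emph{contravariantly} through the coplay component, so $P \mapsto B_P$ is not a covariant functor out of $\cC$; the ``sum out the grade'' step must therefore be read as the coend $\int^P$ rather than a plain colimit. The real work is to check that the composition and strength, defined above on representatives, are dinatural in $P$ --- that is, that they respect the optic sliding equations --- so that they descend to well-defined maps on coend classes. This is the analogue, for optics, of the quotient by isomorphic strategy sets in $\Fam$, and it is precisely where the coherence of the left and right strengths of $A$ is genuinely needed.
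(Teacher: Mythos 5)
Your overall strategy --- decompose $\Optic{A}$ as a graded arrow indexed by the residual $P$ and then sum out the grade via \cref{thm:colim-graded-arrow} --- is exactly the paper's, and your formulas for the graded unit, graded composition and strength on representatives are the right ones. The problem is the point you yourself flag at the very end and then leave unresolved: the assignment $P \mapsto B_P((X,S),(Y,R)) = A(X, P \otimes Y) \times A(P \otimes R, S)$ is covariant in $P$ in the first factor and contravariant in the second, so it is not a functor out of $\cC$, hence not a graded arrow in the sense of the paper's definition (a lax monoidal functor $\cP \to \StrProf(\cC \times \cC^\op)$), and \cref{thm:colim-graded-arrow} cannot be invoked as stated. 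Saying that the colimit ``must be read as the coend'' and that one must ``check dinaturality of composition and strength on representatives'' is not a step within the framework; it silently replaces the cited theorem by an unproved coend-graded variant of it, and amounts to redoing by hand exactly the coend bookkeeping that the graded-arrow machinery was introduced to factor out.

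The missing idea is to re-index so that the grading becomes an honest covariant functor. The paper grades over the twisted arrow category $\msf{Tw}(\cC^\op)^\op$, whose objects are morphisms $f : P' \to P$ of $\cC$, and sets $\Optic{A}^{\msf{Tw}}\,(P \xleftarrow{f} P')\,((X,S),(Y,R)) = A(X, P' \otimes Y) \times A(P \otimes R, S)$: the covariant slot uses the domain $P'$ and the contravariant slot the codomain $P$, so the assignment is genuinely functorial in $f$, and the standard reduction of coends to colimits over the twisted arrow category identifies the colimit of this diagram with the coend defining $\Optic{A}$. With that regrading (and the symmetric monoidal structure that $\msf{Tw}(\cC^\op)^\op$ inherits from $\cC$), your unit, composition and strength go through essentially verbatim and \cref{thm:colim-graded-arrow} applies legitimately; the dinaturality conditions you were worried about become ordinary naturality in the grade. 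Without this move, your argument has a genuine gap at precisely the step you identified as ``the real work''.
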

\begin{proof}
  By \cref{thm:colim-graded-arrow}, and using that coends can
  be reduced to colimits using the twisted arrow category
  $\msf{Tw}(\cC^\op)^\op$ (with objects arrows $f : P' \to P$ in $\cC$)~\cite[Remark 1.2.3]{coendcalc}, it is enough
  to show that the assignment
  $\Optic{A}^{\msf{Tw}} : \msf{Tw}(\cC^\op)^\op \to \StrProf(\cC \times
  \cC^\op)$ given by
  \[
\Optic{A}^{\msf{Tw}}\;(P \xleftarrow{f} P', (X, S), (Y, R)) = A(X, P' \otimes Y) \times A(P \otimes R, S)
  \]
  is a commutative graded arrow, which follows from $A$ being a
  commutative arrow and $\cC$ being symmetric monoidal.
\end{proof}

This compositional proof has factored out the
coend calculations involved into a separate statement about graded
arrows.  We expect many other monoidal categories defined in similar
ways, e.g.\ combs~\cite{combs}, can be proved monoidal using graded arrows to
similarly avoid coend manipulation.
% This is a strict generalisation of lenses in $\Set$, by the
% following result:
% \begin{proposition}[{\cite[Prop.~2.0.4]{catOptics}}]
%   If $\cC$ is cartesian, then $\Optic{\cC} \cong \Lens_{\cC}$.
% \end{proposition}
Next, we follow Bolt, Hedges and Zahn~\cite[\S 3.4]{BoltHZ19} in generalising the type of
the equilibrium function as needed when working in a monoidal setting.

\begin{definition}
\label{def:optic-context}
  Let $A : \Prof(\cC)$ be an arrow, and $\mC$ a monoid. We define the \emph{context} and \emph{equilibrium type} profunctors $\Ctxt{A}, \EqC{A} : \Prof(\cC)$ by
  \begin{align*}
    \Ctxt{A}(X, Y) &= \Optic{A}((I, I),(Y,X)) \\
    \EqC{A}(X, Y) &=  \Ctxt{A}(Y, X) \to \mC
  \end{align*}
\end{definition}

We now prove the optics analogue of \cref{thm:ctxt-bimodule}:

\begin{lemma}
  \label{thm:optics-ctxt-bimodule}
  Let $A : \Prof(\cC)$ be a commutative arrow. Then $\Ctxt{A}$ is a
  context, i.e.\ a commutative $A$-bimodule coherently equipped with a costrength.
\end{lemma}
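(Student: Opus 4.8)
The plan is to unfold the coend and follow the template of \cref{thm:ctxt-bimodule}, with the existential residual of the optic taking on a new and decisive role. By the defining formula for $\Optic{A}$ we have
$$\Ctxt{A}(X,Y) = \Optic{A}((I,I),(Y,X)) = \int^{P\in\cC} A(I, P \otimes Y) \times A(P \otimes X, I),$$
so an element is a class of triples $(P, k, h)$ with $k : A(I, P \otimes Y)$ and $h : A(P \otimes X, I)$. The strengths of the underlying commutative arrow $A$ will be used to thread data past the residual $P$, and every map I define must be checked to respect the dinaturality relation that presents the coend.

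First I would define the two actions, keeping the residual $P$ fixed and acting on exactly one component. For the left action $\ell : A \bullet \Ctxt{A} \to \Ctxt{A}$, given $s : A(X,Y)$ and a triple $(P, k, h) \in \Ctxt{A}(Y,Z)$, I set
$$\ell\big(s, (P, k, h)\big) = \big(P,\; k,\; \st'_P(s) \comp^A h\big),$$
using the left strength $\st'_P(s) : A(P \otimes X, P \otimes Y)$; symmetrically, the right action $r : \Ctxt{A} \bullet A \to \Ctxt{A}$ acts on the first component by $r\big((P,k,h), a\big) = \big(P,\; k \comp^A \st'_P(a),\; h\big)$. These reduce to the formulas of \cref{thm:ctxt-bimodule} when $P = I$, and the bimodule axioms (associativity and unitality of each action, and that $\ell$ commutes with $r$) follow from functoriality of the strength and associativity of $\comp^A$; commutativity of the bimodule is then inherited directly from commutativity of $A$, exactly along the diagram of \cref{def:commutative-arrow}.

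The genuinely new ingredient is the costrength $\cst_{XYZ} : \Ctxt{A}(X \otimes Z, Y \otimes Z) \to \Ctxt{A}(X,Y)$, and here the optic residual does the work for free. Given $(P, k, h)$ with $k : A(I, P \otimes (Y \otimes Z))$ and $h : A(P \otimes (X \otimes Z), I)$, I simply absorb $Z$ into the residual, enlarging it from $P$ to $P \otimes Z$: using associators and symmetry of $\cC$ to build isomorphisms $P \otimes (Y \otimes Z) \cong (P \otimes Z) \otimes Y$ and $(P \otimes Z) \otimes X \cong P \otimes (X \otimes Z)$, I reindex $k$ and $h$ along these and output the triple with residual $P \otimes Z$. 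Unlike the lens case, no projections at points are needed here — this is precisely why \cref{def:optic-context} can promise a context ``which does not require projection at points.''

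I expect the main obstacle to be bookkeeping rather than anything conceptual: checking that $\ell$, $r$, and especially $\cst$ are well-defined on coend classes (i.e.\ respect dinaturality in the residual), and that $\cst$ satisfies the coherence diagrams of \cref{app:costrength} and meshes with $\ell$ and $r$ to yield a context. Both reduce to coherence of the monoidal structure of $\cC$ together with the (di)naturality of the strengths of $A$; commutativity of $A$ is what licenses reordering the strength manipulations, so no step should require more than the arrow laws and Mac Lane coherence.
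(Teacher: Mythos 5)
Your proposal is correct and takes essentially the same route as the paper: the left and right actions fix the residual $P$ and act on a single leg of the optic via the left strength of $A$ followed by $\comp^A$ (exactly the paper's $\int^P(\id\times\st'_P\times\id)$ then $\int^P(\id\times\comp)$), and your costrength is precisely the paper's $(\cst'_Z)_\Theta = \iota_{\Theta\otimes Z}$, i.e.\ absorbing $Z$ into the residual after reassociating. Both arguments defer the coherence checks, which you correctly identify as well-definedness on coend classes plus monoidal coherence.
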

\begin{proof}
  The left costrength is given by the $\Theta$-indexed family of maps $(\cst'_Z)_\Theta = \iota_{\Theta \otimes Z}$.
  The actions are defined using composition and the strength of~$A$, e.g.\ for the left action $A(X, Y) \times \Ctxt{A}(Y, Z) \to \Ctxt{A}(X, Z)$:
  \begin{center}
  \begin{tikzcd}[column sep = 2cm,ampersand replacement=\&]
    {\begin{array}[t]{@{}c@{\hspace{0.3em}}l}
       & A(X, Y) \times \Ctxt{A}(Y, Z)\\
       \cong & \int^{P} A(I, P \otimes Z) \times A(X, Y) \times A(P \otimes Y, I)
     \end{array}}
    \arrow{d}[swap]{\int^P(\id \times \st'_P \times \id)} \\
    \int^{P} A(I, P \otimes Z) \times A(P \otimes X, P \otimes Y) \times A(P \otimes Y, I)
    \arrow{r}{\int^P(\id \times \comp)} \&
    \Ctxt{A}(X,Z)
  \end{tikzcd}
\end{center}
One can then check that all the required laws of a bimodule with costrength are satisfied\fullappendix{ (see \cref{sec:proofs-optics})}.
\end{proof}

Applying \cref{thm:bimodule-arrow} together with
\cref{thm:M-to-the-B-flip}, we thus immediately get the optics
analogue of \cref{prop:Eq-is-arrow}, but this time we do not need the
assumption of projection at points:

\begin{proposition}
\label{prop:A-with-Eq-is-arrow}
  Let $A : \Prof(\cC)$ be a commutative arrow and $\mC$ a monoid.
  The profunctor $\WithEqC{A} : \Prof(\cC)$ defined by
  \[
    (\WithEqC{A})(X,Y) := A(X,Y) \times \EqC{A}(X, Y)
  \]
  is a commutative arrow.
\end{proposition}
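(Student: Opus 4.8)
The plan is to recognise $\WithEqC{A}$ as an instance of the bimodule-plus-monoid construction of \cref{thm:bimodule-arrow}, so that no direct verification of the arrow laws is needed; everything will be delegated to the three preceding results \cref{thm:optics-ctxt-bimodule}, \cref{thm:M-to-the-B-flip} and \cref{thm:bimodule-arrow-comm}. The crucial observation is that the equilibrium type $\EqC{A}$ is exactly the flipped exponential $\mC^{\Ctxt{A}\,\circ\,\flip}$: unfolding \cref{def:optic-context} gives $\EqC{A}(X,Y) = \Ctxt{A}(Y,X) \to \mC$, which is $\mC^{\Ctxt{A}(Y,X)}$, and this is precisely the value of $\mC^{\Ctxt{A}\,\circ\,\flip}$ at $(X,Y)$.

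First I would invoke \cref{thm:optics-ctxt-bimodule} to obtain that $\Ctxt{A}$ is a context for the commutative arrow $A$, i.e.\ a commutative $A$-bimodule equipped with a coherent costrength. Feeding this context, together with the (commutative) monoid $\mC$, into \cref{thm:M-to-the-B-flip} then yields that $\EqC{A} = \mC^{\Ctxt{A}\,\circ\,\flip}$ is simultaneously a commutative $A$-bimodule and a $(\times, 1)$-monoid in $\StrProf(\cC)$. Finally, with $A$ a commutative arrow and $B := \EqC{A}$ a commutative $A$-bimodule that is also a commutative $(\times,1)$-monoid, \cref{thm:bimodule-arrow-comm} applies verbatim and produces a commutative arrow $A \times B = A \times \EqC{A} = \WithEqC{A}$, as required.

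Since the real content lives in the cited lemmas, there is no genuinely hard calculation left here --- the work is bookkeeping. The one point demanding care is the variance juggling behind the identification $\EqC{A} = \mC^{\Ctxt{A}\,\circ\,\flip}$: the exponential $\mC^{(-)}$ is contravariant, so it reverses the variance of $\Ctxt{A}$ in both arguments, and precomposition with $\flip : \cC^\op \times \cC \to \cC \times \cC^\op$ swaps them back, so that $\EqC{A}$ really is a profunctor $\cC^\op \times \cC \to \Set$ with the correct handedness to serve as the bimodule $B$. A second point to flag is that $\mC$ should be taken \emph{commutative} for the induced pointwise $(\times,1)$-monoid on $\EqC{A}$ to be commutative, as \cref{thm:bimodule-arrow-comm} requires; this matches the concrete case where $\mC = \Bool$ with conjunction. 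Notably, and in contrast with \cref{prop:Eq-is-arrow}, this argument routes through \cref{thm:optics-ctxt-bimodule} rather than \cref{thm:ctxt-bimodule}, so the hypothesis of projection at points is not needed.
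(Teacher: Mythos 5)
Your proposal is correct and follows exactly the paper's own route: the paper derives this proposition by combining \cref{thm:optics-ctxt-bimodule}, \cref{thm:M-to-the-B-flip} and \cref{thm:bimodule-arrow}, just as you do. Your remark that $\mC$ should be commutative to meet the hypotheses of \cref{thm:M-to-the-B-flip} is a fair observation about a hypothesis the paper's statement leaves implicit.
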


To parameterise by strategies, we use graded arrows from
Section~\ref{sec:graded-arrows}.  By \cref{prop:param-arrow} and
\cref{thm:colim-graded-arrow}, when $A$ is an arrow,
$\Fam(A) = \Hide{(- \to A)}$ is an arrow. Composing the steps, we get
the following theorem. By using the abstract framework in the previous
section, the proofs are much shorter.

\begin{theorem}
  Let $\cC$ be a symmetric monoidal category and $\mC$ a monoid. There
  is a symmetric monoidal category
  $\Op_{\cC} = \Fam(\WithEqC{\Optic{\Hom{\cC}}})$ of open games
  in $\cC$, and an embedding
  $\cC \times \cC^\op \rightarrow \msf{Op}_\cC$.
\end{theorem}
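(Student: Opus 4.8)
The plan is to obtain $\Op_\cC$ as the terminal link in a chain of arrow constructors, checking at each stage that the base-category hypotheses are satisfied as the base is passed along, and then to extract the symmetric monoidal category together with its embedding directly from \cref{thm:arrow-moncat}. Since all the genuine mathematical content has already been discharged in the earlier lemmas, the proof is essentially a combination of those results, and the only care required is bookkeeping.

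First I would record the standing fact that $\cC \times \cC^\op$ is itself symmetric monoidal: as $\cC$ is symmetric monoidal, so is $\cC^\op$, and products of symmetric monoidal categories are symmetric monoidal, with tensor $((X,S),(X',S')) \mapsto (X \otimes X', S \otimes S')$ and unit $(I,I)$. This is the base on which the open-games arrow will live, and it is exactly the category of pairs of objects named in the statement. Next I would build the arrow in four steps. The starting point is $\Hom{\cC} : \Prof(\cC)$, which is a commutative arrow by the example following \cref{thm:arrow-moncat}. Applying \cref{thm:optics-arrow} (whose hypotheses, namely that $\cC$ be symmetric monoidal and $\Hom{\cC}$ be a commutative arrow, we have just met) yields a commutative arrow $\Optic{\Hom{\cC}} : \Prof(\cC \times \cC^\op)$. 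Since $\cC \times \cC^\op$ is symmetric monoidal, I then apply \cref{prop:A-with-Eq-is-arrow} with base category $\cC \times \cC^\op$ and the given monoid $\mC$, obtaining a commutative arrow $\WithEqC{\Optic{\Hom{\cC}}}$ on the same base. Finally, writing $\Fam(A) = \Hide{(- \to A)}$, \cref{prop:param-arrow} followed by \cref{thm:colim-graded-arrow} shows that $\Fam(\WithEqC{\Optic{\Hom{\cC}}})$ is again a commutative arrow on $\cC \times \cC^\op$; commutativity propagates through the whole chain because each cited result explicitly preserves it.

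Having produced a commutative arrow, I would close the argument by invoking \cref{thm:arrow-moncat}: a commutative arrow on the symmetric monoidal category $\cC \times \cC^\op$ induces a symmetric monoidal category, which we take to be $\Op_\cC$. Its objects are the objects of $\cC \times \cC^\op$, i.e.\ pairs of objects of $\cC$, and its morphisms from $(X,S)$ to $(Y,R)$ are the elements of $\Fam(\WithEqC{\Optic{\Hom{\cC}}})((X,S),(Y,R))$, which are precisely the open games in $\cC$. The same theorem supplies a strict monoidal functor $J : \cC \times \cC^\op \to \Op_\cC$, and this is exactly the claimed embedding.

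There is no genuinely hard step here, as the difficulty was distributed into the preceding constructors. The only thing to watch is the tracking of base categories — in particular confirming that the symmetric-monoidal hypothesis required by \cref{prop:A-with-Eq-is-arrow} and by \cref{thm:arrow-moncat} is verified for $\cC \times \cC^\op$, not for $\cC$ — and ensuring that commutativity, which is what licenses the passage from an arrow to a \emph{symmetric} monoidal category, survives each application; both of these are immediate from the way the cited results are stated.
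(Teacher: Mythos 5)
Your proposal is correct and follows exactly the paper's own route: the paper likewise obtains the result by composing \cref{thm:optics-arrow}, \cref{prop:A-with-Eq-is-arrow}, \cref{prop:param-arrow} with \cref{thm:colim-graded-arrow} (giving $\Fam = \Hide{(-\to A)}$), and finally \cref{thm:arrow-moncat}, with commutativity preserved at each stage. The only difference is that you make the base-category bookkeeping (that everything after the optics step lives over $\cC \times \cC^\op$) explicit, which the paper leaves implicit.
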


%%% Local Variables:
%%% mode: latex
%%% TeX-master: "main"
%%% End:

\section{Other variants of open games}
\label{sec:othergames}

We demonstrate the power of our framework by showing how a host of
variations of open games can be easily proven to form monoidal
categories.

\begin{example}[Fibred Open games]
Lets say we want to add a little bit of type dependency so coutility
can depend on state and utility on the moved played. Type dependency
is modelled fibrationally~\cite{jacobs},  e.g.\ via the families fibration, which is split. Given a (split) fibration $p:\cal{E}
\rightarrow \cal{B}$, define the monoidal category of fibrational
lenses $\Lens_p$ to have the same objects as $\cal{E}$, and with morphisms
from $E$ to $E'$ given by pairs $(f, g)$, where $f : pE \rightarrow pE'$ is a morphism in $\cal{B}$ and $g:f^*E' \rightarrow E$ is a morphism in~$\cal{E}$.
%If $p$ is fibred monoidal then $\Lens_p$ is monoidal and hence an arrow on $\cal{E}$.
If $p$ is a symmetric monoidal fibration~\cite{shulman2008monfib} then $\Lens_p$ is a commutative arrow.
Defining $\Fam$ and $\WithEqCBare$ as in \cref{sec:optics}, we hence get an arrow $\Fam(\WithEqC{\Lens_{p}})$ of fibrational open games. This subsumes the use of
container morphisms~\cite{containers} as dependently typed open games. The point of this example is not that this is necessarily the
right way to do dependent games --- clearly a lot more dependency is
possible --- but rather to show how new variants of open games can easily
be shown monoidal in our compositional framework.
\end{example}

% Profunctor optics were introduced in~\cite{} as they cover a vast
% array of structures. For simplicity we stick to the unenriched case.
% \begin{example}[Profunctor Optics] 
% Let ${\cal M}$ be a monoidal category with strong monoidal functors
% $\boxplus : {\cal M} \rightarrow [\cC, \cC]$ and 
% $\boxtimes : {\cal M} \rightarrow [{\cal D}, {\cal D}]$. Let $S, A \in
% \cC$ and $T,B \in {\cal D}$. The category $\Optic_{\boxplus,
%   \boxtimes}$ has as objects ${\cal D} \cC^\op$and maps given by 
% \[ \Optic_{\boxplus, \boxtimes}(X,S) (Y,R)

% ((A,B),(S,T)):=
% C(S,M L A)⊗D(M R B,T).

% \[
% \int^{J:\cC} \cC(A,M * B) \times \cD(M @,)
% \]
% \end{example}

\begin{example}[Best Response]
	\label{ex:bestresp}
  Using graded bimodules (Definition \ref{defn:graded-bimodule}) and
  Theorem \ref{thm:graded-bimodule-arrow}, we can recover \emph{best
    response} games~\cite[\S 2.1.5]{julesPhD}. In these games, the
  equilibrium predicate is a relation on strategies in context,
  expressed as a graded bimodule w.r.t.\ the graded arrow $A_J := J \to A$, for a given arrow $A$.
  \begin{displaymath}
    \mathrm{BestResp}_J(X,Y) = \Ctxt{A}(Y,X) \to J \times J \to \mathrm{Bool}
  \end{displaymath}
  The left action, of type $A_J(X,Y) \times \mathrm{BestResp}_{J'}(Y,Z) \to \mathrm{BestResp}_{J\times J'}(X,Z)$, can be defined as follows.
  Given a context $c \in \Ctxt{A}(Z,X)$ and a strategy $j \in J$, we can use the right action of the context (which is an $A$-bimodule) to get a context $c' \in \Ctxt{A}(Z,Y)$.
  Thus, given four parameters $(j_1,j'_1,j_2,j_2')$, we use~$j_1$ to create a new context and feed it to $\mathrm{BestResp}_{J'}(Y,Z)$ together with $(j'_1,j'_2)$ to obtain a boolean.
  Note that~$j_2$ is discarded, which is also what happens when we define the sequential composition of best response games.
  Best response games are then defined as
  % $\mathrm{BROG}(X,Y) =
  $\colim_J\big((J \to \Lens(X,Y)) \times \mathrm{BestResp}_J(X,Y)\big)$,
  which automatically gives a symmetric monoidal category by Theorem
  \ref{thm:graded-bimodule-arrow} and Proposition
  \ref{thm:arrow-moncat}.
\end{example}

\begin{example}[Probabilistic games]
  \label{ex:probOG}
  Surprisingly, Probabilistic Open Games~\cite{probOG} can be recovered using a construction similar to the one of \cref{ex:bestresp}.
  The main difference is that the graded bimodule is now indexed by $\D(J)$, where $\D$ is the finite distribution monad.
  However, for it to work properly, we need to make some assumptions on~$\cC$ and~$A$.
  
  For base category, we take $\cC = \Set \times (\DAlg)^\op$, where $\DAlg$ is the category of algebras for the monad~$\D$.
  There is a forgetful functor $U : \Set \times (\DAlg)^\op \to \Set \times
  \Set^\op$ to the ``usual'' base category $\Set \times \Set^\op$, obtained by sending an algebra to its carrier set.
  Over this category, the profunctor $\Lens(U-,U-) = \Lens \circ (U^\op \times U)$ can easily be seen to be a commutative arrow (because $\Lens$ is), and thus by \cref{prop:param-arrow},
  $A_J(X,Y) = J \to \Lens(U(X),U(Y)$ is a $\Set_{\text{iso}}$-graded arrow.
  With these choices of $\cC$ and $A$, the graded $A$-bimodule for probabilistic equilibria selection is:
  \begin{displaymath}
    \mathrm{ProbEquib}_J(X,Y) = \Ctxt{\Lens}(U(Y),U(X)) \to \D(J) \to \mathrm{Bool}
  \end{displaymath}
  To match Definition 12 in~\cite{probOG}, the left and right actions are defined quite differently: while the left action relies on the so-called \emph{Kleisli predicate lifting} $\hashbar$~\cite[Def.~11]{probOG}, the right action uses the $\D$-algebra structure to compute the expectation of the payoff.
  
  Finally, we define
  % $\mathrm{ProbOG}(X,Y) =
  $\Op_{\mathrm{Prob}} = \colim_J\big(A_J(X,Y) \times \mathrm{ProbEquib}_J(X,Y)\big)$,
  which again automatically yields a symmetric monoidal category by
  our results. This links nicely to the definition of the equilibrium
  for sequential composition of probabilistic open games~\cite[Def.~12]{probOG}: the definition in \emph{loc.cit.}\
  uses a conjunction whose conjuncts correspond to the left  and right bimodule actions, respectively. This matches the arrow composition of $\Op_{\mathrm{Prob}}$ given by \cref{thm:graded-bimodule-arrow} (cf. proof of \cref{thm:bimodule-arrow-comm}). 
\end{example}

So far all examples have had sets of strategies. Its not surprising
this does not have to be the case and there are other options. Our
framework means this can be handled compositionally.

\begin{example}[Strategies need not be Sets]
	Consider the parametrisation functor $\msf{Para} : \Prof(\cC) \to \Prof(\cC)$ defined by
	$(\msf{Para} \; A)(X,Y) = \msf{colim}_{J  : \cC_{\text{iso}}} \; A(J \otimes
	X,Y)$~\cite{backprop,brunoMSc}. By our theorem on colimits
	of graded arrows, $\msf{Para}$ maps commutative arrows to
	commutative arrows and hence could replace
	$\Fam$. More generally, given $F:\cB \times \cC
	\rightarrow \cC$, we define $(\msf{Para}_F\; A) (X,Y) = \msf{colim}_{J
		: \cB_{\text{iso}}} \; A( F(J,X),Y)$.
	Again $\msf{Para}_F$ maps commutative arrows to
	commutative arrows, and can hence be used to construct open games
	$\msf{Para}_F(\WithEqC{\Lens})$ with strategies living in a different category than the play/coplay function.
\end{example}

\begin{example}[Learners]
If we take the category of lenses, omit the equilibria (or, equivalently, take equilibria valued in the trivial monoid), and use the $\msf{Para}$
construction above, we get Fong, Spivak and Tuy\'eras' monoidal category of
learners~\cite{backprop}. That is, defining $\msf{Learn} = \msf{Para}(\Lens)$
makes learners an instance of our framework.
\end{example}

\begin{example}[Effectful coplay and utility functions]
  Let $\cC$ be a symmetric monoidal category. In \cref{sec:optics}, we
  used optics to define an arrow on $\cC \times \cC^\op$,
  but the coend machinery used in its definition is quite heavy, e.g.\ 
one might worry about the existence of the coends. An alternative construction (folklore according to \cite[\S
  2.2]{spivakGenLens}), instead defines an
  arrow on $\msf{Comon}(\cC) \times \cC^\op$, where
  $\msf{Comon}(\cC)$ is the category of \emph{commutative comonoids}
  in $\cC$. The underlying profunctor is defined by
  $\Lens_{\cC}((X, S), (Y, R)) = \Hom{\msf{Comon}(\cC)}(X, Y) \times \Hom{\cC}(X
  \otimes R, S)$, i.e.\ the same as for ordinary lenses, except that
  the play functions must respect the comonoid structure. In
  particular, If $\cC$ is a Kleisli category for a commutative monad
  on a cartesian category, then every object has a natural comonoid
  structure. Further, in this case $\Lens_{\cC}$ has projections at
  points, and so we can use \cref{thm:ctxt-bimodule} and
  \cref{thm:M-to-the-B-flip,thm:bimodule-arrow-comm} to equip it with
  a notion of equilibrium function $\Lens_{\cC} \times E(\Lens_{\cC})$. In the resulting notion of games $\Fam(\Lens_{\cC} \times E(\Lens_{\cC}))$, comonoid
  homomorphisms are intuitively `pure' maps, so this notion of game allows
  `effectful` coplay and utility functions, while the play function
  and the state in the equlibrium predicate remain pure.
\end{example}

%\newcommand{\Mon}{\mathrm{Monoid}}
% \begin{proposition} \jl{The plan was to remove this proposition, right?}
%   Let $A : \StrProf(\cC)$ be an arrow and $C : \StrProf(\cC)$ be an
%   $A$-bimodule. Assume that we have $M : \Set_{iso} \to \Mon(\Set)$
%   equipped with:
%   \begin{displaymath}
%     \begin{array}{lcl}
%       \mathit{left}&:&(J \to M(K)) \to M(J \times K) \\
%       \mathit{right}&:&(J \to M(K)) \to M(J \times K) \\
%     \end{array}
%   \end{displaymath}
%   Then $B : \Set_{iso} \to \StrProf(\cC)$ defined as
%   $B_J(X,Y) = C(Y,X) \to M(J)$ is a graded $A^\cP$-bimodule that is
%   also a $(\times,1)$ monoid.
% \end{proposition}

% \begin{proof}
%   $\dots$

%   Left action: $(J \to A(X,Y)) \times (C(Z,Y) \to M(K)) \longrightarrow (C(Z,X) \to M(J \times K))$.

%   $(a,p) \mapsto \lambda (c:C(Z,X)).\mathit{left}(\lambda j. p'~(c \comp a(j)))$

% \end{proof}

% \subsection{Games with comonoids, profunctor games, games over combs, games with
%   algebras, etc. other such things. Enrichment} 

%%% Local Variables:
%%% mode: latex
%%% TeX-master: "main"
%%% End:

\section{Conclusions and future work}

We have developed a new framework for open games.
Conceptually, our framework is not just an axiomatisation
of open games, but a compositional framework allowing us to vary
and reorder components without having to redevelop the whole theory of
the resulting open games. This is achieved using standard
mathematical structures: arrows, bimodules and gradings. These
structures allow us to define new arrows from old which is the hallmark of compositionality. Collectively, we demonstrate the
utility of this framework by showing how  a number of existing and new
variants of open games can be shown to form monoidal categories with
relative ease. This is important as the monoidal structure reflects
the parallel and sequential composition of open games.

As is becoming clear, there is a huge variety of open games each with
their pros and cons and each of which makes sense to use in some
circumstances. We must now work to determine which of the categories
of Open Games we have constructed are useful for Compositional Game
Theory. To do this, we will need to gain experience in modelling game
theoretic concepts in our categories, and to observe how modifying the
constructions using the building blocks we have developed here affects
the game theoretic models. We will be aided in this investigation by
studying embeddings between the various categories of Open Games.

We have built tools to prove that categories are symmetric
monoidal. However, there are other operators on games beyond parallel
and sequential composition, e.g.\ an operator for repeated games is
needed. A start is given by~\cite{GhaniKLF18}, but we want to
integrate such operators into our general compositional framework.

All our profunctors have been of the form
$\cC^\op \times \cC \to \Set$ but we could replace $\Set$ by some
other cocomplete monoidal category with (we conjecture) minimal
disruption. This would give us an enriched theory of open games which
will be important for the investigation of topological games, with
notions of limit games.

Finally, it will be interesting to see if this framework has
ramifications for software for open games by providing all the
benefits compositionality does. That is, rather than implementing a
specific notion of open games and expecting users to use that notion,
our compositional framework would allow users to build their own
notion of open games suitable for their own needs (mixing both
predefined components such as for example $\msf{Fam}$ or $\msf{Para}$
with their own specific definitions). The fact that we have used
canonical categorical structures such as profunctors, modules and
gradings means the road to a Haskell implementation should be a
pleasant road to engage with. We hope to make it so as a matter of
priority.

\paragraph{Acknowledgements.} The authors would like to thank Exequiel Rivas and Jules Hedges for helpful discussions and comments.

%%% Local Variables:
%%% mode: latex
%%% TeX-master: "main"
%%% End:

\bibliographystyle{eptcs}
\bibliography{papers}

\newpage

\appendix
\section{Definition of a bimodule with costrength}
\label{app:costrength}

For the sake of completeness, we spell in full details the definition of a costrength, and of a bimodule equipped with a costrength (used to define the notion of context, \cref{def:context}).

\begin{definition}[Costrength]
\label{def:costr}
A \emph{(right) costrength} for a profunctor $F : \cC \profto \cC$ is
a family of morphisms
$\cst_{X,Y,Z} : F(X \otimes Z, Y \otimes Z) \to F(X,Y)$
(natural in~$X,Y$ and dinatural in~$Z$) making two diagrams
commute:
\begin{center}
	\begin{tikzcd}
		F((X \otimes Z) \otimes Z', (Y \otimes Z) \otimes Z') \arrow[r,"\cst_{Z'}"] \arrow[d,"{F(\alpha^{-1},\alpha)}"]
		& F(X \otimes Z, Y \otimes Z) \arrow[d,"\cst_{Z}"]
		\\
		F(X \otimes (Z \otimes Z'), Y \otimes (Z \otimes Z')) \arrow[r,"\cst_{Z\otimes Z'}",swap]
		& F(X,Y)
	\end{tikzcd}
    \qquad and \qquad
	\begin{tikzcd}
	F(X \otimes I, Y \otimes I) \arrow[d,"\cst_I",bend left] \arrow[d,"{F(\rho^{-1},\rho)}",bend right,swap] \\
	F(X,Y)
\end{tikzcd}
\end{center}
Given a profunctor $F : \cC \profto \cC$ with a right costrength, assuming $\cC$ is symmetric, we can define the left costrength $\cst'_{X,Y,Z} : F(Z \otimes X, Z \otimes Y) \to F(X,Y)$ by $\cst' := \cst \circ F(\sigma, \sigma)$.
\end{definition} 

When an $A$-bimodule has a strength, the two actions $\ell$ and $r$ are required to be strong natural transformations. When it has a costrength, we require similar conditions:
\begin{definition}[Bimodule with costrength]\label{def:bimod_costr}
Let $A : \cC \profto \cC$ be an arrow.
When an $A$-bimodule $B : \cC \profto \cC$ is equipped with a costrength, we also require the two commutative diagrams below, relating the strength of~$A$, costrength of~$B$ and the left and right actions~$\ell$ and~$r$:
\begin{center}
	\begin{tikzcd}
		A(X \otimes W, Y \otimes W) \times B(Y \otimes W, Z \otimes W) \arrow[rr,"\ell"]
		&& B(X \otimes W, Z \otimes W) \arrow[dd,"\cst_W"] \\
		A(X,Y) \times B(Y \otimes W,Z \otimes W)
		\arrow[u,"\st_W \times \id"] \arrow[d,"\id \times \cst_W",swap] && \\
		A(X, Y) \times B(Y,Z) \arrow[rr,"\ell"]
		&& B(X,Z)
	\end{tikzcd}
\end{center}
and
\begin{center}
\begin{tikzcd}
	B(X \otimes W, Y \otimes W) \times A(Y \otimes W, Z \otimes W) \arrow[rr,"r"]
	&& B(X \otimes W, Z \otimes W) \arrow[dd,"\cst_W"] \\
	B(X \otimes W, Y \otimes W) \times A(Y,Z)
	\arrow[u,"\id \times \st_W"] \arrow[d,"\cst_W \times \id",swap] && \\
	B(X, Y) \times A(Y,Z) \arrow[rr,"r"]
	&& B(X,Z)
\end{tikzcd}
\end{center}
\end{definition}
\medskip

\noindent
Finally, we also have a counterpart of the commutativity condition, mixing strength and costrength.

\begin{definition}[Commutative bimodule with costrength]\label{def:comm_bimod_costr}
	An $A$-bimodule $B$ with a costrength is commutative when the following diagram commutes:
\begin{center}
\begin{tikzcd}[column sep = -5pt]
	A(X,Y) \times B(Y \otimes X', X \otimes Y')
	\arrow[rr,"\cong"]
	\arrow[d,"\st_{X'} \times \id",swap]
	&& B(Y \otimes X', X \otimes Y') \times A(X,Y)
	\arrow[d,"\id \times \st_{Y'}"]\\
	A(X \otimes X',Y \otimes X') \times B(Y \otimes X', X \otimes Y')
	\arrow[d,"\ell",swap]
	&& B(Y \otimes X', X \otimes Y') \times A(X \otimes Y',Y \otimes Y')
	\arrow[d,"r"] \\
	B(X \otimes X', X \otimes Y') \arrow[dr,"\cst'_X",swap]
	&& B(Y \otimes X', Y \otimes Y') \arrow[dl,"\cst'_Y"]\\
	& B(X',Y') &
\end{tikzcd}
\end{center}
\end{definition}

%\subsection{Proof of \texorpdfstring{\cref{thm:M-to-the-B-flip}}{Theorem \ref{thm:M-to-the-B-flip}}}
%
%\begin{refthm}{thm:M-to-the-B-flip}
%	Let $A$ be a commutative arrow, $B : \cC \profto \cC$ be a context for~$A$, and
%	$(M,\mmult,\munit)$ be a commutative monoid (in $\Set$). Then $M^{B\, \circ\, \flip}$
%	is a commutative $A$-bimodule and $(\times,1)$-monoid.
%\end{refthm}
%\begin{proof}
%TODO (maybe)
%\end{proof}

%\input{fullappendix}

\end{document}